\documentclass{amsart}

\usepackage{amssymb,amsmath}
\usepackage{verbatim}

\usepackage{comment}
\usepackage{cancel}

\usepackage{tikz}
\usepackage{pgfplots}

\usepackage[section]{algorithm}
\usepackage{algpseudocode}

\usepackage{cleveref}
\usepackage{url}
\usepackage{bm}

\newtheorem{theorem}{Theorem}[section]
\newtheorem{proposition}[theorem]{Proposition}
\newtheorem{lemma}[theorem]{Lemma}
\newtheorem{corollary}[theorem]{Corollary}

\newtheorem{definition}[theorem]{Definition}

\newtheorem{remark}[theorem]{Remark}

\newtheorem{example}[theorem]{Example}

\def\NN{{\mathbb N}}

\def\FF{{\mathbb F}}

\def\Ker{{\mathrm{Ker\,}}}

\def\Gr{Gr\"obner}

\def\GF{{\mathrm{GF}}}

\def\wt{{\mathrm{wt}}}
\def\mod{{\mathrm{mod}}}

\def\GB{{\textsc{Groebner}}}
\def\GBSafe{{\textsc{GroebnerSafe}}}

\def\MultiSolve{{\textsc{MultiSolve}}}
\def\GBDecode{{\textsc{GBDecode}}}

\def\maxdeg{{\mathrm{maxdeg}}}
\def\Oracle{{\textsc{Oracle}}}

\def\OracleH{{\textsc{OracleH}}}
\def\OracleT{{\textsc{OracleT}}}

\def\tame{{\tt tame}}
\def\wild{{\tt wild}}
\def\status{{\tt status}}

\def\cF{{\mathcal{F}}}
\def\cO{{\mathrm{O}}}

\def\cC{{\mathcal{C}}}

\def\bt{{\bar{t}}}

\def\max{{\mathrm{max}}}

\def\Magma{{\sc Magma}}

\begin{document}

\title[Hamming Ideals and Gr\"obner Bases for Syndrome Decoding]
{Hamming Ideals and Gr\"obner Bases for ISD-like Syndrome Decoding}


\author[R. La Scala]{Roberto La Scala$^*$}

\author[M. Marchesin]{Marco Marchesin$^{**}$}

\author[S.K. Tiwari]{Sharwan K. Tiwari$^{\dagger}$}

\address{$^*$ Dipartimento di Fisica, Universit\`a degli Studi di Bari
``Aldo Moro'', Via Orabona 4, 70125 Bari, Italy}
\email{roberto.lascala@uniba.it}

\address{$^{**}$ Dipartimento di Matematica, Universit\`a degli Studi di Bari
``Aldo Moro'', Via Orabona 4, 70125 Bari, Italy}
\email{marco.marchesin@uniba.it}

\address{$^{\dagger}$ Cryptography Research Centre, Technology Innovation Institute,
Abu Dhabi, United Arab Emirates}
\email{sharwan.tiwari@tii.ae}

\thanks{
The first and second authors were supported by the MUR PRIN 2022SC project, Grant No. 2022RFAZCJ.
The first author was also co-funded by the University of Bari through the ``Fondo acquisto e manutenzione
attrezzature per la ricerca'', Grant No. DR 3191.
}
\subjclass[2020]{Primary 94B35; Secondary 94A60, 11T71, 13P10.}

\keywords{Syndrome decoding; Information set decoding; Gr\"obner bases; 
Hamming varieties; Polynomial systems over finite fields.}

\begin{abstract}
We investigate an algebraic approach to the Syndrome Decoding Problem, based on a reformulation
of the Hamming weight constraint and its integration with the Information Set Decoding paradigm.
We begin with a systematic analysis of the Hamming variety, deriving its defining equations
in terms of elementary symmetric functions. Since these equations may have high degree,
we exploit convolution identities for elementary symmetric functions, together with factorizations
based on Lucas' identity, to derive an equivalent formulation with auxiliary variables and
equations of bounded degree.

Building on this modeling, we generalize the ISD paradigm through an ISD-like decoding strategy,
implemented by the \GBDecode\ algorithm, in which only a subset of an information set
is fixed. This approach reduces the size of the combinatorial search space at the cost
of solving the associated multivariate nonlinear systems. To handle this algebraic component,
we employ the \MultiSolve\ algorithm, which replaces a single \Gr\ basis computation with
a collection of computations on simpler systems, obtained by exhaustively assigning
a varying number of indeterminates over the finite field. This provides a tunable balance
between combinatorial search and algebraic solving.

We evaluate the resulting approach experimentally on instances of the Syndrome Decoding Problem
for random binary linear codes, using parameters corresponding to the NIST Security Category 1
parameter set of the Classic McEliece cryptosystem.
The experiments assess the feasibility of this combinatorial-algebraic approach and provide
insights into the practical behavior of \Gr\ basis techniques within an ISD-like decoding framework.
\end{abstract}

\maketitle


\section{Introduction}

Random-looking linear codes are widely regarded as difficult to decode. This
observation, already formalized by Berlekamp, McEliece, and van Tilborg
through the NP-completeness of the general decoding problem \cite{BMEVT},
has provided the theoretical foundation for code-based cryptography for almost
fifty years. The basic idea, introduced by McEliece in 1978 \cite{ME} and later
reformulated by Niederreiter in terms of parity-check matrices \cite{Ni},
consists of hiding the structure of a code with an efficient decoding algorithm
so that it appears to be a random linear code. In practice, this is achieved
by multiplying the generator or parity-check matrix by invertible and permutation
matrices, obtaining an equivalent representation of the code that hides
its original structure.

From the perspective of an adversary, the resulting instance is therefore
expected to resemble a generic decoding problem, for which no efficient algorithms
are currently known, even in the presence of a quantum computer. This aspect
has become especially important since Shor's discovery of polynomial-time quantum
algorithms for integer factorization and discrete logarithms \cite{Shor}, which showed
that the security assumptions of RSA and elliptic-curve cryptography can be broken
by quantum computers. As a consequence, interest in post-quantum cryptography has grown
considerably, and code-based cryptographic schemes have emerged as some of the most
promising candidates for long-term deployment. This is reflected in the NIST
post-quantum standardization process, where code-based cryptography has played
a prominent role, with schemes such as Classic McEliece, HQC, and BIKE being
among the most extensively studied candidates \cite{PQC,ClassME,HQC,BIKE}.

The security of code-based cryptographic constructions relies on the
hardness of the Syndrome Decoding Problem (SDP), and in particular of its
exact-weight variant, the Exact Syndrome Decoding Problem (ESDP). Given a
random parity-check matrix $H$ and a syndrome $s$, the goal is to recover an
error vector $e$ of prescribed Hamming weight satisfying $H e = s$.
Understanding exactly how hard this problem is, and for which parameters,
is therefore not just a theoretical question: it directly determines the
security level of cryptosystems used in practice.

For decades, Information Set Decoding (ISD) has been the dominant
approach to attacking the SDP in the generic setting. The original algorithm,
proposed by Prange \cite{Pra}, introduced the fundamental idea underlying all
subsequent developments. One selects a set of free variables, referred to as an
information set, for the linear system corresponding to the matrix equation $H e = s$,
and assumes that the solution is zero on these variables.

Equivalently, one extends the linear system with additional linear equations
imposing these zero constraints, thereby obtaining a linear system with a unique
solution that, under the assumption, has the prescribed weight. The Syndrome Decoding
Problem can therefore be solved using linear algebra alone. Since Prange's assumption
is satisfied only with a certain probability, ISD repeatedly changes the information set
until a suitable one is found. This success probability consequently determines
the combinatorial complexity of the ISD attack.

A major further development came with Stern, who introduced a meet-in-the-middle
technique that substantially reduced the exponential complexity of Prange's ISD
algorithm \cite{Stern}. Dumer subsequently refined this approach using generalized
birthday techniques \cite{Dumer}. The representation techniques introduced by May, Meurer,
and Thomae \cite{MMT}, and later refined by Becker, Joux, May, and Meurer \cite{BJMM},
led to further improvements in asymptotic complexity. More recent advances have incorporated
nearest-neighbor search and sieving techniques \cite{MayOzerov}. Despite these successive
refinements, however, the underlying algorithmic paradigm has remained essentially unchanged:
ISD is still, at its core, an optimized combinatorial search over information sets.
No sub-exponential algorithm is known for the cryptographic parameter regime, and this
remarkable persistence is one of the main reasons why decoding problems remain attractive
as hardness assumptions. At the same time, the relatively stable complexity of ISD
also suggests exploring alternative algorithmic paradigms that may provide a complementary
perspective on the computational structure of decoding.

One such paradigm is provided by algebraic methods. In \cite{MPS} and \cite{CCMMP},
the Hamming weight constraint is shown to admit an algebraic modeling, in which
the coordinates of an arbitrary vector are regarded as variables and constrained by
a system of multivariate nonlinear equations. Thus, the (Exact) Syndrome Decoding Problem
can be formulated as a problem of solving a multivariate polynomial system, in which
the linear equations corresponding to $H e = s$ are supplemented with nonlinear equations
defining the Hamming variety, namely, the set of vectors having a prescribed Hamming weight.
Note that \cite{MPS} actually establishes an equivalence between the Syndrome Decoding Problem
and the Multivariate Quadratic (MQ) problem, while \cite{Zajac} investigates the reduction
of the MQ problem to SDP through MRHS representations.

A first contribution of the present work is a systematic analysis of the Hamming variety,
leading to the derivation of its defining equations in terms of elementary symmetric functions
and to a reduction in their number. This analysis is presented in Section 2. Since these equations
may have high degree, Sections 3--5 develop a degree-reduction strategy based on auxiliary variables,
convolution identities for elementary symmetric functions, and factorizations derived from
Lucas' identity, ultimately yielding an equivalent formulation with equations of bounded degree.

We further extend the ISD approach to this algebraic formulation of the Syndrome
Decoding Problem. In Sections 6 and 7, we investigate the use of subsets of information
sets, at the cost of solving multivariate nonlinear systems, for instance by means
of \Gr\ basis techniques. This approach, implemented in the \GBDecode\ algorithm in Section 7,
reduces the combinatorial complexity of the ISD search, while introducing the additional
cost of solving the resulting nonlinear systems.

To explore this trade-off, we employ the \MultiSolve\ algorithm \cite{LSPTV,LST}, described
in Section 8, which replaces the computation of a single \Gr\ basis by a collection of \Gr\ basis
computations for simpler systems, obtained by exhaustively assigning a varying number
of indeterminates over the finite field. The practical behavior of this strategy within
an ISD-like attack is investigated experimentally in Section 9 on Syndrome Decoding instances
for random binary linear codes, using the NIST Security Category 1 setting of the Classic McEliece
cryptosystem, with $n = 3488, k = 2720, t = 64$.


\section{The Hamming ideals}

Let $\FF_2 = \GF(2)$ denote the binary field and let $n > 0$ be an integer.
Consider the vector space $V = \FF_2^n$. The {\em Hamming weight} of
a vector $v = (v_1,\ldots,v_n)\in V$ is defined as
\[
\wt(v) = \# \{i \mid v_i = 1\}\in\{0,1,\ldots,n\}
\]
Let $t = \wt(v)$ and set $l = \lfloor \log_2(n) \rfloor$. Consider
the binary expansion
\[
t = \sum_{0\leq k\leq l} t_k 2^k\ (t_k\in\FF_2)
\]
A first goal is to express the binary digits $t_k$ as Boolean functions of the coordinates
$v_1,\ldots,v_n$.

Let $\FF$ be any field and let $\FF[x_1,\ldots,x_n]$ denote the algebra
of the multivariate polynomials with coefficients in $\FF$.
For each integer $d\geq 0$, the {\em elementary symmetric function
of degree $d$ in the variables $x_1,\ldots,x_n$} is by definition the polynomial
\[
e_d(x_1,\ldots,x_n) =
\sum_{1 \le i_1 < \cdots < i_d \le n} x_{i_1}\cdots x_{i_d}\in \FF[x_1,\ldots,x_n]
\]
with the convention that $e_0 = 1$ and $e_d = 0$ for $d < 0$ or $d > n$. Throughout the paper,
ESF stands for elementary symmetric function. When $\FF = \mathbb{F}_2$, these are
referred to as Boolean elementary symmetric functions.

\begin{lemma}
\label{lemma1}
Let $v = (v_1,\ldots,v_n)\in V$ and put $t = \wt(v)$. Then
\[
e_d(v_1,\ldots,v_n)\equiv \binom{t}{d}\ \mod\ 2
\]
\end{lemma}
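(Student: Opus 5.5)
The plan is to evaluate $e_d$ at $v$ by counting its nonzero monomials. Since $v\in\FF_2^n$ has exactly $t=\wt(v)$ coordinates equal to $1$, let $S=\{i \mid v_i=1\}$, so $\#S=t$.

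First, evaluate a single monomial. For indices $1\le i_1<\cdots<i_d\le n$, the product $v_{i_1}\cdots v_{i_d}$ equals $1$ exactly when every $i_j$ lies in $S$, and equals $0$ otherwise. Therefore
\[
e_d(v_1,\ldots,v_n)=\sum_{1\le i_1<\cdots<i_d\le n} v_{i_1}\cdots v_{i_d}
\]
reduces to a sum of $1$'s, one for each $d$-element subset of $S$.

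Next, count these terms. There are exactly $\binom{t}{d}$ such subsets, so the sum equals $\binom{t}{d}\cdot 1$ in $\FF_2$, that is, $\binom{t}{d}$ reduced mod $2$.

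Finally, check the edge cases against the stated conventions. For $d=0$ we have $e_0=1=\binom{t}{0}$. For $d<0$ or $d>n$ both sides are $0$. For $t<d\le n$ the sum is empty and $\binom{t}{d}=0$.

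There is no real obstacle here. The only care needed is to note that the evaluation takes place in $\FF_2$, which is exactly why the integer count $\binom{t}{d}$ appears only modulo $2$.
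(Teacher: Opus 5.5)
Your proof is correct and follows the paper's argument: a monomial $v_{i_1}\cdots v_{i_d}$ equals $1$ exactly when $\{i_1,\ldots,i_d\}\subset S$, so $e_d(v)$ counts the $d$-element subsets of $S$ and equals $\binom{t}{d}$ reduced mod $2$. Your explicit check of the boundary conventions ($d=0$, $d<0$, $d>n$) is a small addition that the paper leaves implicit.
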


\begin{proof}
By definition, we have
\[
e_d(v_1,\ldots,v_n) = \sum_{1\le i_1<\cdots<i_d\le n} v_{i_1}\cdots v_{i_d}\in\FF_2
\]
Since $v_i\in\FF_2$, a monomial $v_{i_1}\cdots v_{i_d}$ equals $1$ if and only if
all the chosen indices correspond to nonzero coordinates of $v$. Equivalently,
\[
\{i_1,\ldots,i_d\}\subset \{i \mid v_i = 1\} = S
\]
If $t = \wt(v) = \# S$, the number of such $d$-tuples is exactly $\binom{t}{d}$.
\end{proof}

Lucas' theorem (1878) is a well-known result describing the reduction of binomial
coefficients modulo a prime $p$ in terms of their base-$p$ expansions. In the case $p = 2$,
it yields the following.

\begin{theorem}[Lucas' Theorem modulo 2]
Let $t = \sum_k t_k 2^k$ and $d = \sum_k d_k 2^k$ be the binary expansions of two integers
$t,d\geq 0$. Then
\[
\binom{t}{d} \equiv \prod_k \binom{t_k}{d_k}\ \mod\ 2
\]
\end{theorem}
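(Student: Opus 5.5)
The plan is the standard generating-function argument in the polynomial ring $\FF_2[X]$. By the binomial theorem, $\binom{t}{d}$ mod 2 is the coefficient of $X^d$ in $(1+X)^t\in\FF_2[X]$. The key tool is the Frobenius identity $(1+X)^2 = 1+X^2$ in characteristic $2$. By induction on $k$ it gives $(1+X)^{2^k} = 1+X^{2^k}$ for every $k\geq 0$.

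Next we write $t=\sum_k t_k 2^k$ and factor:
\[
(1+X)^t = \prod_k \bigl((1+X)^{2^k}\bigr)^{t_k} = \prod_k \bigl(1+X^{2^k}\bigr)^{t_k}.
\]
Since $t_k\in\{0,1\}$, each factor is either $1$ or $1+X^{2^k}$. Expanding the product, every monomial has the form $X^{\sum_k c_k 2^k}$, where $c_k\in\{0,1\}$ and $c_k\le t_k$. The exponents $\sum_k c_k 2^k$ are binary expansions, so by uniqueness of binary expansion they are pairwise distinct. Hence no two terms collide and no cancellation occurs.

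Finally we read off the coefficient of $X^d$ with $d=\sum_k d_k2^k$. By uniqueness, the only possible contributing choice is $c_k=d_k$ for all $k$. It is admissible exactly when $d_k\le t_k$ for every $k$, and then the coefficient is $1$; otherwise the coefficient is $0$. This matches $\prod_k\binom{t_k}{d_k}$, since $\binom{t_k}{d_k}$ equals $0$ precisely when $t_k=0$ and $d_k=1$, and equals $1$ otherwise. If $d>t$, some $d_k>t_k$ must occur, so both sides vanish consistently; digits beyond the lengths of the expansions are taken to be $0$.

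The only step needing care is the no-cancellation argument via uniqueness of binary representation; the rest is routine. As an alternative, one could also prove it by induction using $\binom{t}{d}\equiv\binom{\lfloor t/2\rfloor}{\lfloor d/2\rfloor}\binom{t_0}{d_0}$, but the generating-function route is cleaner.
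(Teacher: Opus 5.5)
The paper gives no proof of this statement. It quotes Lucas' theorem as the classical 1878 result and uses it as a black box, in Theorem~\ref{main1} and in the Factorization identity of Section~4. So there is nothing to compare against, and your argument supplies a proof the paper omits.

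Your generating-function proof is correct and complete as written:
\begin{itemize}
\item The Frobenius identity gives $(1+X)^{2^k}=1+X^{2^k}$ in $\FF_2[X]$.
\item The product $\prod_{k\,:\,t_k=1}(1+X^{2^k})$ expands into monomials with pairwise distinct exponents, by uniqueness of binary expansion.
\item Hence the coefficient of $X^d$ is $1$ exactly when $d_k\le t_k$ for all $k$, and $0$ otherwise.
\item The case $d>t$ is handled correctly.
\end{itemize}

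Beyond making Section~2 self-contained, your route fits the paper's setting well. For $v\in\FF_2^n$ with $\wt(v)=t$, one has $\prod_i(1+v_iX)=(1+X)^t=\prod_k(1+t_kX^{2^k})$. The left side has $e_d(v)$ as its coefficient of $X^d$. Comparing coefficients of $X^{2^k}$ gives $e_{2^k}(v)=t_k$, which is Theorem~\ref{main1}. Comparing coefficients of $X^d$ gives $e_d(v)=\prod_{k\,:\,d_k=1}e_{2^k}(v)$, which is the Factorization identity. So Lemma~\ref{lemma1}, Lucas' theorem, Theorem~\ref{main1} and the Factorization identity all reduce to one coefficient comparison.

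The same argument extends verbatim to any prime $p$, using $(1+X)^{p^k}=1+X^{p^k}$ over $\FF_p$ and uniqueness of base-$p$ digits.
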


As an immediate application of Lucas' theorem, we obtain the following result.

\begin{theorem}
\label{main1}
Let $v = (v_1,\dots,v_n)\in V$ and let
\[
t = \sum_{0\leq k\leq l}t_k 2^k
\]
be the binary expansion of the Hamming weight $t = \wt(v)$, where $l = \lfloor \log_2(n) \rfloor$
and $t_k\in\FF_2$. Then, for each $k$, one has
\[
t_k = e_{2^k}(v)
\]
where $e_{2^k}$ denotes the Boolean ESF of degree $2^k$.
\end{theorem}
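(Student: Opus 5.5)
The proof will be a direct combination of Lemma~\ref{lemma1} with Lucas' Theorem modulo 2, applied to $d = 2^k$. Fix $k$ with $0\leq k\leq l$ and put $d = 2^k$.

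\emph{Step 1 (reduce to a binomial coefficient).} By Lemma~\ref{lemma1}, evaluating the Boolean ESF at $v$ gives
\[
e_{2^k}(v_1,\ldots,v_n) \equiv \binom{t}{2^k} \mod 2,
\]
where $t = \wt(v)$. So it is enough to show that $\binom{t}{2^k}\equiv t_k \mod 2$.

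\emph{Step 2 (apply Lucas' theorem).} The binary expansion of $d = 2^k$ has $d_k = 1$ and $d_j = 0$ for all $j\neq k$. Lucas' Theorem modulo 2 then gives
\[
\binom{t}{2^k} \equiv \binom{t_k}{1}\prod_{j\neq k}\binom{t_j}{0} \mod 2.
\]
Since $\binom{t_j}{0}=1$ and $\binom{t_k}{1}=t_k$ for $t_k\in\{0,1\}$, the product is $t_k$, which is the claim.

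\emph{Step 3 (range of indices).} There is no real obstacle; the only point needing care is the index range. Because $t\leq n<2^{l+1}$, the expansion of $t$ involves only digits $t_0,\ldots,t_l$, so every $t_k$ in the statement is covered by the argument above. If $2^k>n$ (which cannot occur for $k\leq l$) or $2^k>t$, then $e_{2^k}(v)=0$ is consistent with $t_k=0$, matching the conventions for $e_d$ with $d>n$. The degenerate case $t=0$ is also consistent: $e_{2^k}(v)=0$ and all $t_k=0$.
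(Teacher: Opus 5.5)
Your proposal is correct and follows the paper's own proof: Lemma~\ref{lemma1} reduces $e_{2^k}(v)$ to $\binom{t}{2^k}$ mod 2, and Lucas' Theorem modulo 2 with $d = 2^k$ collapses the product to $\binom{t_k}{1} = t_k$. Your extra remarks on the index range ($t \leq n < 2^{l+1}$, and the cases $2^k > t$ or $t = 0$) are accurate; the paper leaves them implicit.
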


\begin{proof}
By Lucas' Theorem modulo 2, applied for $d = 2^k$ ($0\leq k\leq l$), one has
\[
\binom{t}{2^k}\equiv t_k\ \mod\ 2
\]
Since $e_{2^k}(v) = \binom{t}{2^k}\ \mod\ 2$, the claim follows.
\end{proof}

Based on the previous results, we introduce the following notions.

\begin{definition}
For any integer $0\leq t\leq n$, the {\em Hamming variety of weight $t$}
is defined as
\[
H_t = \{ v \in V \mid \wt(v) = t \}
\]
Let $R = \FF_2[x_1,\ldots,x_n]$ and let $\cF = \langle x_i^2 - x_i \mid 1\leq i\leq n \rangle$
be the {\em field equations ideal} of $R$. Write the binary expansion of $t$ as
\[
t = \sum_{0\leq k\leq l} t_k 2^k\ (l = \lfloor \log_2(n) \rfloor)
\]
Then, the {\em Hamming ideal of weight $t$} is defined as the ideal $I_t + \cF \subset R$
where
\[
I_t = \langle e_{2^k} - t_k \mid 0\leq k\leq l \rangle
\]
\end{definition}

\medskip
Let $J\subset R$ be an ideal and let $\bar{\FF}_2$ denote the algebraic
closure of the field $\FF_2$. If $\bar{V} = \bar{\FF}_2^n$, one defines
\[
V(J) = \{ v\in \bar{V}\mid f(v) = 0, \forall f\in J\},\
V_{\FF_2}(J) = V(J)\cap V.
\]
By a well-known consequence of the Nullstellensatz over finite fields (see,
for instance, \cite{Gh}), one has
\[
V_{\FF_2}(J) = V(J + \cF)
\]
with $J + \cF$ a radical ideal.

Combining these results with Theorem \ref{main1} shows that the Hamming variety
and the Hamming ideal are related by
\[
V_{\FF_2}(I_t) = V(I_t + \cF) = H_t
\]

An interesting question is whether the number of generators of the ideal $I_t + \cF$ can
be reduced. Recall that $V = \FF_2^n$ denotes the $n$-dimensional vector space over the binary
field.

\begin{theorem}
Let $0\leq t\leq n$ and write $t = \sum_{0\leq k\leq l} t_k 2^k$ $(l = \lfloor \log_2(n) \rfloor)$.
For an integer $0\leq L\leq l$, the following properties are equivalent:
\begin{itemize}
\item[$(1)$] $\forall v \in V$, if $\wt(v) = \sum_{0\leq k\leq l} t'_k 2^k$ and
$t'_0 = t_0,\ldots,t'_L = t_L$ then $\wt(v) = t$;
\item[$(2)$] $2^{L+1} > \max(t, n-t)$.
\end{itemize}
In particular, the minimal integer $L$ for which these equivalent properties hold is
\[
L = \lceil \log_2( \max(t,n-t) + 1 ) \rceil - 1.
\]
When $n = 2^m$, this minimal $L$ is equal to $m$ for $t \in \{0,n\}$, and to $m-1$ otherwise.
\end{theorem}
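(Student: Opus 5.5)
Property (1) says that knowing the $L+1$ lowest binary digits of $\wt(v)$ pins down $\wt(v)$ completely. Equivalently, it says that $t$ is the only integer in $\{0,1,\ldots,n\}$ congruent to $t$ modulo $2^{L+1}$. The set of attainable weights is exactly $\{0,\ldots,n\}$, since every $w$ in this range is $\wt(v)$ for some $v\in V$. Moreover, $t'_0=t_0,\ldots,t'_L=t_L$ holds precisely when $\wt(v)\equiv t \pmod{2^{L+1}}$. Digits beyond position $l$ play no role, because $\wt(v)\leq n<2^{l+1}$. So the plan is to first rewrite (1) as the arithmetic statement
\[
\forall w\in\{0,\ldots,n\},\quad w\equiv t \ \mod\ 2^{L+1}\ \Rightarrow\ w=t,
\]
and then show this is equivalent to (2).

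For (2)$\Rightarrow$(1), take $w\neq t$ in $\{0,\ldots,n\}$ with $w\equiv t$ modulo $2^{L+1}$. Then either $w\geq t+2^{L+1}$, or $w\leq t-2^{L+1}$. The first forces $n-t\geq 2^{L+1}$, and the second forces $t\geq 2^{L+1}$. Both contradict (2). For (1)$\Rightarrow$(2), argue by contraposition. If $t\geq 2^{L+1}$, then $w=t-2^{L+1}$ lies in $\{0,\ldots,n\}$ and is a different weight with the same low digits. If $n-t\geq 2^{L+1}$, then $w=t+2^{L+1}\leq n$ works in the same way. In either case, take $v$ with $\wt(v)=w$ to violate (1).

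For the minimal $L$: condition (2) is monotone in $L$. The least $L$ with $2^{L+1}>M$, where $M=\max(t,n-t)$, is the least $L$ with $2^{L+1}\geq M+1$, namely $L+1=\lceil\log_2(M+1)\rceil$. One should also check that this $L$ satisfies $0\leq L\leq l$. We have $M+1\geq 1$, and $M=0$ only when $n=0$, so $L\geq 0$ once $n\geq 1$. Also $M+1\leq n+1\leq 2^{l+1}$, so $L\leq l$.

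For $n=2^m$, the cases $t\in\{0,n\}$ give $M=n$, so $L=\lceil\log_2(2^m+1)\rceil-1=m$. Otherwise $1\leq t\leq n-1$ gives $M\leq 2^m-1$. Since $t+(n-t)=2^m$, we also get $M\geq 2^{m-1}$, so $M+1\in(2^{m-1},2^m]$ and $L=m-1$. The only mildly delicate point in the whole argument is the initial reformulation of (1), in particular justifying that all weights in $\{0,\ldots,n\}$ are attained and that digits above $l$ never matter. The rest is elementary.
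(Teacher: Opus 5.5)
Your proposal is correct and follows the paper's proof essentially step by step. Like the paper, you use surjectivity of the weight map to rewrite (1) as uniqueness of $t$ in its residue class modulo $2^{L+1}$ within $\{0,\ldots,n\}$, prove both directions by shifting $t$ by $\pm 2^{L+1}$, and read off the minimal $L$; in addition, you check that this $L$ lies in $[0,l]$ and work out the case $n=2^m$ explicitly, which the paper leaves as immediate.
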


\begin{proof}
Since the Hamming weight $\wt:V \to \{0,1,\dots,n\}$ is a surjective map,
the condition $(1)$ is equivalent to the following arithmetical statement
\begin{itemize}
\vspace{2pt}
\item[$(1')$] if $0\leq t'\leq n$ and $t' \equiv t\ \mod\ 2^{L+1}$ then $t' = t$.
\vspace{2pt}
\end{itemize}
Assume by contradiction that condition $(2)$ does not hold, that is
\[
2^{L+1} \leq \max(t,n-t)
\]
Hence, either $t + 2^{L+1} \leq n$ or $t - 2^{L+1} \geq 0$.
In the first case, let $t' = t + 2^{L+1}$; in the second case, let
$t' = t - 2^{L+1}$. In both cases, we have $0\leq t'\leq n$, $t'\neq t$,
and
\[
t' \equiv t\ \mod\ 2^{L+1}
\]
This contradicts condition $(1')$, which states that $t$ is the unique
integer in $\{0,1,\ldots,n\}$ modulo $2^{L+1}$. Therefore, condition $(1')$ necessarily
implies condition $(2)$.

In a similar way, assume now that condition $(2)$ holds, that is
\[
2^{L+1} > \max(t,n-t)
\]
Then $t - 2^{L+1} < 0$ and $t + 2^{L+1} > n$, so there exists no integer
$0\leq t'\leq n$, $t'\neq t$, such that $t' \equiv t\ \mod\ 2^{L+1}$.
Thus $(1')$ holds, and consequently condition $(1)$ is satisfied.

Finally, condition $(2)$ is equivalent to
\[
L+1 > \log_2( \max(t,n-t) )
\]
and the minimal integer $L$ for which this inequality holds is
\[
L = \lceil \log_2(\max(t,n-t)+1) \rceil - 1
\]
The final claim for the case $n = 2^m$ follows immediately.
\end{proof}

Using the above result, the generators of the Hamming ideal corresponding
to the elementary symmetric polynomials of degree $2^k$ with $k > L$ can be
omitted modulo $\cF$. Hence, the Hamming ideal $I_t + \cF$ can equivalently
be defined by
\[
I_t = \langle e_{2^k} - t_k \mid 0\leq k\leq L \rangle
\]
where $L = \lceil \log_2( \max(t,n-t) + 1 ) \rceil - 1$ and $t = \sum_{0\leq k\leq L} t_k 2^k$
is the binary representation of the weight $0\leq t\leq n$.

\section{Convolution Hamming ideals}

As $n$ increases, the explicit computation of the polynomial $e_{2^k}$ rapidly becomes
infeasible, due to the presence of $\binom{n}{2^k}$ monomials.
Therefore, we look for an alternative characterization of the Hamming ideal $I_t + \cF$
as the elimination ideal of a larger lifted ideal. In this context, recursive definitions
of the elementary symmetric functions provide a natural approach. We start with combinatorial
identities that hold over an arbitrary field $\FF$.

To simplify the notation, we introduce the following convention. For any integers $d\geq 0$
and $1\leq a\leq b\leq n$, we set
\[
e_d^{(a,b)} = e_d(x_a,\ldots,x_b) \in \FF[x_1,\ldots,x_n]
\]
where $e_d(x_a,\ldots,x_b)$ denotes the elementary symmetric polynomial of degree $d$
in the variables $x_a,\ldots,x_b$.

We have the following well-known combinatorial identity (see, for instance, \cite{Ma})

\begin{proposition}[Convolution identity]
Let $n\geq 2$, $0\leq d\leq n$ and $1\leq m < n$ be integers. It holds
\[
e_d^{(1,n)} = \sum_{0\leq k\leq d} e_k^{(1,m)} e_{d-k}^{(m+1,n)}
\]
where we adopt the convention that $e_0^{(a,b)} = 1$ and $e_k^{(a,b)} = 0$ for $k > b-a+1$.
Equivalently, the sum can be restricted to
\[
\max(0,d-(n-m)) \leq k \leq \min(d,m)
\]
\end{proposition}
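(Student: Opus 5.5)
The natural approach is combinatorial: expand both sides as sums of squarefree monomials and match them term by term. Write $A = \{1,\ldots,m\}$ and $B = \{m+1,\ldots,n\}$, so that $\{1,\ldots,n\}$ is the disjoint union of $A$ and $B$. By definition,
\[
e_d^{(1,n)} = \sum_{S\subseteq\{1,\ldots,n\},\ \#S = d} x_S,
\qquad x_S = \prod_{i\in S} x_i .
\]
The key step is the bijection sending a $d$-subset $S$ to the pair $(S\cap A, S\cap B)$. Here $S\cap A$ is a $k$-subset of $A$ and $S\cap B$ is a $(d-k)$-subset of $B$ for a unique $k$, with $0\le k\le d$. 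Conversely, any such pair has disjoint union a $d$-subset of $\{1,\ldots,n\}$. Since the index sets are disjoint, $x_S = x_{S\cap A}\,x_{S\cap B}$.

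I would then group the sum according to $k = \#(S\cap A)$. For fixed $k$, the monomials with that value of $k$ are exactly the products of a monomial of $e_k^{(1,m)}$ with a monomial of $e_{d-k}^{(m+1,n)}$, each product arising exactly once. Summing over $k$ gives the stated identity. The identity holds over $\mathbb{Z}$, hence over any field $\FF$, since every coefficient involved is $1$ and no cancellation or characteristic issue arises.

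The boundary conventions need a quick check, and they take care of themselves. When $k=0$ or $k=d$, the empty subset contributes the factor $e_0 = 1$. When $k > m$ there are no $k$-subsets of $A$, and when $d-k > n-m$ there are no $(d-k)$-subsets of $B$. In both cases the corresponding term is $0$, in agreement with the convention $e_k^{(a,b)}=0$ for $k>b-a+1$.

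These vanishing terms are exactly what gives the restricted range. Nonzero terms require $k\le \min(d,m)$ and $d-k\le n-m$, that is, $k\ge d-(n-m)$; together with $k\ge 0$ this yields $\max(0,d-(n-m))\le k\le \min(d,m)$. There is no real obstacle here. The only point needing care is to state the bijection precisely, so that the coefficients are seen to match exactly rather than merely up to multiplicity.
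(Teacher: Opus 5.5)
Your proof is correct and complete. The bijection $S\mapsto (S\cap A,\,S\cap B)$ matches the squarefree monomials on the two sides one-to-one with coefficient $1$. Your handling of the conventions ($e_0=1$, and vanishing when $k>m$ or $d-k>n-m$) is right, and so is your derivation of the restricted range $\max(0,d-(n-m))\le k\le\min(d,m)$.

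The paper gives no proof of its own. It quotes the identity as well known and cites a standard reference on symmetric functions. The usual textbook derivation reads it off from the generating function $\prod_{i=1}^{n}(1+x_i z)=\prod_{i=1}^{m}(1+x_i z)\prod_{i=m+1}^{n}(1+x_i z)$ by comparing coefficients of $z^d$. That route is a one-line algebraic encoding of your bijection. It gets the vanishing conventions and the restricted summation range for free, because the two factors have degrees $m$ and $n-m$ in $z$. Your version instead makes the coefficient matching explicit, and with it the validity over an arbitrary field $\FF$.
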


As a special case, taking $m = n - 1$ yields the classical Stifel identity.

\begin{proposition}[Stifel identity]
Let $0\leq d\leq n$. Then
\[
e_d^{(1,n)} = e_d^{(1,n-1)} + x_n e_{d-1}^{(1,n-1)}
\]
\end{proposition}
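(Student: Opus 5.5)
The plan is to derive the Stifel identity as the special case $m = n-1$ of the Convolution identity stated just above, and then to give a short direct combinatorial check for completeness.

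Apply the Convolution identity with $m = n-1$. The second block of variables is then the single variable $x_n$, and we get
\[
e_d^{(1,n)} = \sum_{0\leq k\leq d} e_k^{(1,n-1)}\, e_{d-k}^{(n,n)} .
\]
Since $e_{d-k}^{(n,n)} = e_{d-k}(x_n)$ is a polynomial in one variable, it equals $1$ when $d-k = 0$, equals $x_n$ when $d-k = 1$, and vanishes otherwise. So only the terms $k = d$ and $k = d-1$ survive. Equivalently, the restricted range $\max(0,d-1)\leq k\leq \min(d,n-1)$ leaves at most these two terms. This gives $e_d^{(1,n-1)} + x_n e_{d-1}^{(1,n-1)}$.

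The only point needing care is the boundary cases, which the paper's conventions already handle.
\begin{itemize}
\item For $d = 0$, the term $e_{-1}^{(1,n-1)}$ is $0$, so the identity reads $1 = 1$.
\item For $d = n$, the term $e_n^{(1,n-1)}$ is $0$ because $n > n-1$, so the identity reads $x_1\cdots x_n = x_n\, e_{n-1}^{(1,n-1)}$, which is correct.
\item The Convolution identity assumes $n\geq 2$. For $n = 1$ the claim can be checked by hand: with $e^{(1,0)}_0 = 1$ and all other $e^{(1,0)}_k = 0$, it reduces to $e_0 = 1$ and $e_1 = x_1$.
\end{itemize}

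Alternatively, the identity can be proved directly without the Convolution identity. Split the sum defining $e_d^{(1,n)}$ according to whether the index $n$ belongs to $\{i_1<\cdots<i_d\}$.
\begin{itemize}
\item The $d$-subsets of $\{1,\dots,n\}$ that avoid $n$ contribute exactly $e_d^{(1,n-1)}$.
\item The $d$-subsets that contain $n$ are in bijection with the $(d-1)$-subsets of $\{1,\dots,n-1\}$, and each contributes $x_n$ times the corresponding monomial, giving $x_n e_{d-1}^{(1,n-1)}$.
\end{itemize}
Neither route has a real obstacle. The main thing to be careful about is that the conventions $e_k = 0$ for $k<0$ or $k$ larger than the number of variables are applied consistently at the edge values of $d$.
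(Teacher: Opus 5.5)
Your proof is correct and takes the paper's route: the paper obtains the Stifel identity simply as the case $m = n-1$ of the Convolution identity, which is your first argument. Your boundary checks at $d=0$, $d=n$ and the separate $n=1$ case are extra care the paper omits, and your direct subset-splitting argument is also a valid independent proof.
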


In contrast to Stifel’s choice $m = n - 1$, one may take $m = \lfloor (1+n)/2 \rfloor =
\lceil n/2 \rceil$, in which case $n - m = \lfloor n/2 \rfloor$, yielding a balanced
partition of the variables.

We consider the construction of the elementary symmetric functions appearing in the definition
of the Hamming ideal $I_t + \cF$ ($0\leq t\leq n$), namely $e_{2^k}^{(1,n)}$ for all $0\leq k\leq L$,
where
\[
L = \lceil \log_2( \max(t,n-t)+1 ) \rceil - 1
\]
We seek to bound the number of ESFs produced by the recursive convolution scheme
obtained by splitting each set of variables into two balanced blocks. The starting point
is the following lemma.

\begin{lemma}
\label{lemma2}
Let $n\geq 2$, and let $T_n$ be the set of intervals generated by the recursive binary
decomposition of $\{1,\ldots,n\}$, where each interval $(a,b) = \{a,\ldots,b\}$ $(a < b)$
is split into
\[
(a,m)\cup (m+1,b)
\]
with $m = \lfloor \frac{a+b}{2} \rfloor$. Let $0\leq L\leq \lfloor \log_2(n) \rfloor$
be an integer, and define
\[
N(n,L) = \sum_{(a,b)\in T_n} \min(|(a,b)|,2^L)
\]
where $|(a,b)| = b-a+1$ denotes the length of the interval $(a,b)$. Then
\[
N(n,L) = \cO(n L)
\]
\end{lemma}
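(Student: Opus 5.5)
We organize the sum defining $N(n,L)$ by the depth of the intervals in the recursion tree $T_n$. Put $n$ at the root $(1,n)$, at depth $0$. Every split $(a,b)\mapsto (a,m)\cup(m+1,b)$ with $m=\lfloor (a+b)/2\rfloor$ gives two subintervals of lengths $\lceil |(a,b)|/2\rceil$ and $\lfloor |(a,b)|/2\rfloor$.

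The first step is to record two structural facts.

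\emph{Disjointness.} The intervals at a fixed depth $j$ are pairwise disjoint subsets of $\{1,\ldots,n\}$, so their lengths sum to at most $n$.

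\emph{Length bound.} Each interval at depth $j$ has length at most $\lceil n/2^j\rceil$. This follows by induction, using $\lceil \lceil x\rceil/2\rceil=\lceil x/2\rceil$.

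Consequently the tree has depth at most $\lceil \log_2 n\rceil$: the recursion stops at singletons, and intervals of length $1$ are reached by depth $\lceil\log_2 n\rceil$.

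Next we split the depths into two ranges, with threshold $j_0=\lceil \log_2 n\rceil - L$ (or any $j_0$ with $\lceil n/2^{j_0}\rceil\le 2^{L+1}$, up to a constant).

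\emph{Shallow levels} $j< j_0$. Here we use $\min(|(a,b)|,2^L)\le 2^L$. The number of intervals at depth $j$ is at most $2^j$, so the contribution is at most
\[
\sum_{j<j_0} 2^j\, 2^L \le 2^{j_0+L} \le 2^{\lceil\log_2 n\rceil}\le 2n .
\]

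\emph{Deep levels} $j\ge j_0$. Here we use $\min(|(a,b)|,2^L)\le |(a,b)|$. By disjointness each level contributes at most $n$. The number of such levels is at most $\lceil\log_2 n\rceil - j_0 + 1 = L+1$. So the contribution is at most $n(L+1)$.

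Adding the two ranges gives $N(n,L)\le 2n+n(L+1)=\cO(n(L+1))$. This is $\cO(nL)$ for $L\ge1$. For $L=0$ each term equals $1$ and $N(n,0)=|T_n|=2n-1$, so the bound should be read as $\cO(n(L+1))$, or equivalently $\cO(nL)$ with the convention $L\ge 1$.

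A cruder alternative ignores the shallow/deep split: every level contributes at most $\min(n,2^j2^L)$, and summing over all levels already gives $\cO(n\log n)$. The refinement above is exactly what replaces $\log n$ by $L$.

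The only delicate points are the counting details: that level $j$ has at most $2^j$ intervals, and the handling of the ceilings in the choice of $j_0$. Both are routine inductions, so no single step should be a real obstacle.
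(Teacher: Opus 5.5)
Your proof is correct and follows the paper's argument: split the levels of the tree at $D-L$ with $D=\lceil\log_2 n\rceil$, bound the shallow levels by $2^j\cdot 2^L$, and bound each of the remaining deep levels by $n$ via disjointness. The differences are minor. The paper places level $D-L$ in the shallow range, which gives $4n+Ln$, while you place it in the deep range, which gives $2n+(L+1)n$. Your remark that the bound is really $\cO(n(L+1))$ is also correct: the statement fails literally at $L=0$, where $N(n,0)=2n-1$, and this caveat applies equally to the paper's own bound $n(L+4)$.
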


\begin{proof}
Let $D$ be the depth of the decomposition binary tree $T_n$, so that
\[
D = \lceil \log_2(n) \rceil
\]
For each $0\leq k\leq D$, let $Z_k$ denote the set of intervals at level $k$ of the decomposition.
We observe two structural properties. First, note that $|Z_k|\leq 2^k$ for all $k$. Second, for each
fixed $k$, the intervals in $Z_k$ are pairwise disjoint and contained in $\{1,\dots,n\}$, hence
\[
\sum_{I\in Z_k} |I| \leq n
\]
We write
\[
N(n,L) = \sum_{0\leq k\leq D}\sum_{I\in Z_k} \min(|I|,2^L)
\]
and split the sum into two ranges.

Case 1: $0\leq k\leq D-L$. Using $\min(|I|,2^L)\leq 2^L$ and $|Z_k|\leq 2^k$, we obtain
\[
\begin{array}{l}
\sum_{0\leq k\leq D-L} \sum_{I\in Z_k} \min(|I|,2^L) \leq
(\sum_{0\leq k\leq D-L} 2^k)\, 2^L \\ [6pt]
\qquad = (2^{D-L+1}-1)\, 2^L = 2^{D+1} - 2^L < 2^{D+1}
\end{array}
\]
Since $D = \lceil \log_2(n) \rceil$, we have $n\leq 2^D\leq 2 n$, and consequently
$2^{D+1} \leq 4 n$. Therefore, it holds
\[
\sum_{0\leq k\leq D-L} \sum_{I\in Z_k} \min(|I|,2^L)\leq 4 n
\]

Case 2: $D-L+1\leq k\leq D$. Since $\min(|I|,2^L)\leq |I|$, we have
\[
\sum_{I\in Z_k} \min(|I|,2^L) \leq \sum_{I\in Z_k}|I| \leq n
\]
There are at most $L$ such levels, so
\[
\sum_{D-L+1\leq k\leq D} \sum_{I\in Z_k} \min(|I|,2^L) \leq L n
\]
Combining the two estimates gives
\[
N(n,L)\leq 4 n + L n = n(L + 4)
\]
which proves the claim.
\end{proof}

An immediate consequence of the above result is the following proposition.

\begin{proposition}
\label{main2}
Let $n\geq 2$ and $0\leq L\leq \lfloor \log_2(n) \rfloor$ be integers. Consider the recursive
construction of the elementary symmetric functions $e_{2^k}^{(1,n)}$ for $0\leq k \leq L$,
obtained by repeated application of the convolution identity with balanced splits. Then,
the total number of ESF instances generated by this recursion is
\[
N(n,L) = \sum_{(a,b)\in T_n} \min(|(a,b)|,2^L)
\]
where the sum ranges over all intervals $(a,b)$ in the recursive binary decomposition tree $T_n$.
Moreover, it holds
\[
N(n,L) = \cO(n L)
\]
\end{proposition}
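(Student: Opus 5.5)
The plan is to count directly which ESFs the recursion has to produce at each node of $T_n$, show that this number is at most $\min(|(a,b)|,2^L)$ (up to an additive constant per node, which the stated formula absorbs), and then apply Lemma \ref{lemma2} for the asymptotic bound.

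First I will identify the degrees needed at each node. At the root $(1,n)$ we need $e_{2^k}^{(1,n)}$ for $0\leq k\leq L$, so all degrees are at most $2^L$. Take a node $(a,b)$ with split point $m=\lfloor (a+b)/2\rfloor$. By the convolution identity, computing $e_d^{(a,b)}$ uses only $e_j^{(a,m)}$ and $e_{d-j}^{(m+1,b)}$ with $0\leq j\leq d$. So if all degrees requested at $(a,b)$ are $\leq 2^L$, then all degrees requested at the children are also $\leq 2^L$. By induction down the tree, every node only ever needs $e_d^{(a,b)}$ with $0\leq d\leq 2^L$.

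Next I use the convention that $e_d^{(a,b)}=0$ for $d>|(a,b)|$ and $e_0=1$. These are trivial and generate no instance. Hence the nontrivial ESFs at node $(a,b)$ are exactly $e_d^{(a,b)}$ with $1\leq d\leq \min(|(a,b)|,2^L)$, which gives $\min(|(a,b)|,2^L)$ instances. At a leaf the only instance is $e_1=x_a$, which agrees with $\min(1,2^L)=1$.

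I also need the reverse direction, that all these degrees really are requested, so that the count is equality rather than only an upper bound. At the root the recursion computes the full range $1\leq d\leq 2^L$. Computing every $e_d^{(a,b)}$ in the convolution sum then requests $e_j$ of each child for every $j\leq \min(|\text{child}|,2^L)$. Realistically the construction computes the full range at every node. If one insists on only the powers $2^k$ at the root, the count is still bounded above by the same expression, and that is all the $\cO(nL)$ claim needs.

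Summing over the nodes of $T_n$ gives $N(n,L)=\sum_{(a,b)\in T_n}\min(|(a,b)|,2^L)$. Lemma \ref{lemma2} then gives $N(n,L)\leq n(L+4)=\cO(nL)$.

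The main obstacle is the downward closure of the degree bound together with the claim that the count is exactly $\min(|(a,b)|,2^L)$. The point to handle is that the index $j$ in the convolution range $\max(0,d-(n-m))\leq j\leq\min(d,m)$ never exceeds $2^L$ or the block size. Once that is checked, the rest is direct bookkeeping plus the lemma.
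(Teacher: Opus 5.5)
Your proposal is correct and follows the same route as the paper: count the nontrivial ESFs $e_d^{(a,b)}$ with $1\leq d\leq \min(|(a,b)|,2^L)$ at each node of $T_n$, sum over the tree, and apply Lemma \ref{lemma2} to get $N(n,L)\leq n(L+4)$. Two of your points are more careful than the paper's argument without changing it: the explicit downward-closure induction on the degree bound, and the remark that the formula is only an upper bound at the root if just the degrees $2^k$ are computed there (the construction in Theorem \ref{main3} uses the full range at the root).
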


\begin{proof}
For each interval $(a,b)\in T_n$, the convolution formula for $e_{2^k}^{(a,b)}$
involves intermediate degrees $0\leq d\leq 2^k$. Since $2^k\leq 2^L$ for every
$0\leq k\leq L$, the recursive computation at node $(a,b)$ requires all ESFs
\[
e_d^{(a,b)}\ (1\leq d\leq \min(|(a,b)|,2^L))
\]
Thus, excluding the trivial constant term $e_0^{(a,b)} = 1$, the number of ESF
instances generated at node $(a,b)$ is $\min(|(a,b)|,2^L)$.
Summing over all nodes gives
\[
N(n,L) = \sum_{(a,b)\in T_n}\min(|(a,b)|,2^L)
\]
The bound $N(n,L) = \cO(nL)$ follows directly from the previous lemma.
\end{proof}

We now translate the recursive construction from Proposition \ref{main2} into
an algebraic setting.

\begin{theorem}
\label{main3}
Let $0\leq t\leq n$ be an integer and let $t = \sum_{0\leq k\leq L} t_k 2^k$
be its binary expansion where $L = \lceil \log_2( \max(t,n-t) + 1 ) \rceil - 1$.
Consider the two sets of variables
\[
\begin{array}{c@{\ }c@{\ }l}
X & = & \{x_1,\ldots,x_n\}, \\
Y & = & \{ y_d^{(a,b)}\, |\, 1\leq d\leq \min(|(a,b)|,2^L), (a,b)\in T_n \}
\end{array}
\]
Let $R = \FF_2[X]$ and $R' = \FF_2[X \cup Y]$ be the corresponding polynomial algebras,
and let $\cF$ and $\cF'$ denote their respective field equations ideals.
Define the following sets of polynomials
\[
\begin{array}{c@{\ }c@{\ }l}
A & = & \{ y_1^{(a,a)} - x_a \mid 1\leq a\leq n \}, \\
B & = & \{ y_d^{(a,b)} - \sum_{0\leq k\leq d} y_k^{(a,m)} y_{d-k}^{(m+1,b)} \mid
        1\leq d\leq \min(b-a+1,2^L), \\
  &   & (a,b)\in T_n, a < b, m = \lfloor \frac{a+b}{2} \rfloor \}, \\
C & = & \{ y_{2^k}^{(1,n)} - t_k \mid 0\leq k \leq L \}
\end{array}
\]
For polynomials in $B$, we adopt the convention that $y_0^{(a,b)} = 1$ and $y_d^{(a,b)} = 0$
for $d > b-a+1$. Moreover, we put
\[
N = N(n,L) = \sum_{(a,b)\in T_n} \min(b-a+1,2^L)
\]
By considering the ideal $J_t = \langle A \cup B \cup C \rangle \subset R'$, we obtain
\[
(J_t + \cF')\cap R = I_t + \cF,\, \varphi( V_{\FF_2}(J_t) ) = V_{\FF_2}(I_t) = H_t
\]
where $\varphi:\FF_2^{n + N} \to \FF_2^n$ denotes the canonical projection onto
the $X$-coordinates.
\end{theorem}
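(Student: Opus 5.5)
The plan is to show that the generators in $A \cup B$ are, modulo the ideal they generate, ``definitions'' of the $Y$-variables as polynomials in $X$. The elimination claim and the variety claim then both follow from a substitution (triangular) argument.

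\emph{Step 1: define the substitution.} For each $(a,b)\in T_n$ and $1\leq d\leq \min(|(a,b)|,2^L)$, set $f_d^{(a,b)} = e_d^{(a,b)} \in R$. Consider the $R$-algebra homomorphism $\psi: R'\to R$ that fixes $X$ and sends $y_d^{(a,b)}\mapsto e_d^{(a,b)}$. I will check that $\psi$ kills $A\cup B$. For $A$ this holds because $e_1^{(a,a)} = x_a$. For $B$ it is exactly the convolution identity applied to the interval $(a,b)$ with split point $m$. The conventions $y_0=1$ and $y_d=0$ for $d$ beyond the length match those for $e$. One point needs care: every index $y_k^{(a,m)}$ and $y_{d-k}^{(m+1,b)}$ occurring in $B$ must actually be a variable in $Y$, i.e.\ satisfy $k\leq \min(|(a,m)|,2^L)$. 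This holds since $k\leq d\leq 2^L$.

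\emph{Step 2: $\langle A\cup B\rangle = \ker\psi$.} Order the $Y$-variables by interval length, with leaves first. By induction up the tree $T_n$, each $y_d^{(a,b)} - e_d^{(a,b)}$ lies in $\langle A\cup B\rangle$. At a leaf this is $A$. At an internal node, the generator in $B$ expresses $y_d^{(a,b)}$ through $Y$-variables of strictly shorter intervals, which by induction are congruent to the corresponding $e$'s; the convolution identity then finishes the step. Hence $\langle A\cup B\rangle = \langle y - \psi(y) \mid y\in Y\rangle$, which is the kernel of the retraction $\psi$. It follows that $J_t = \langle y-\psi(y)\rangle + \langle C\rangle$ and $\psi(J_t) = \langle \psi(C)\rangle R = \langle e_{2^k}^{(1,n)} - t_k\rangle = I_t$.

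\emph{Step 3: eliminate.} For $\cF'$, note that $\psi(\cF') = \cF + \langle e^2 - e\rangle$, and $e^2 - e\in \cF$ because any polynomial $g$ satisfies $g^2 - g\in\cF$ over $\FF_2$ (Frobenius modulo field equations). So $\psi(J_t+\cF') = I_t+\cF$. Now take $f\in (J_t+\cF')\cap R$. Since $\psi$ is the identity on $R$, we get $f = \psi(f)\in I_t+\cF$. Conversely, $I_t+\cF\subset J_t+\cF'$ because $e_{2^k}^{(1,n)} - t_k \equiv y_{2^k}^{(1,n)} - t_k$ modulo $\ker\psi$. This proves $(J_t+\cF')\cap R = I_t+\cF$.

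\emph{Step 4: varieties.} The points of $V_{\FF_2}(\langle A\cup B\rangle)$ are exactly the pairs $(v, (e_d^{(a,b)}(v)))$, and these lie in $\FF_2^{n+N}$ whenever $v\in\FF_2^n$. Imposing $C$ cuts this down to $v$ with $e_{2^k}(v)=t_k$ for all $k\leq L$. So $\varphi(V_{\FF_2}(J_t)) = V_{\FF_2}(I_t)$, and this equals $H_t$ by Theorem~\ref{main1} together with the reduction theorem on $L$.

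The main obstacle is bookkeeping rather than mathematics. One has to verify that the truncation $d\leq 2^L$ is consistent across the recursion, so that no generator of $B$ references an undefined variable, and that the convention $y_d=0$ for $d>|(a,b)|$ is applied coherently. I would handle both in the induction of Step 2.
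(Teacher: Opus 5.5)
Your proposal is correct and takes essentially the same route as the paper. The paper argues informally that $A\cup B$ determines each $y_d^{(a,b)}$ recursively as $e_d^{(a,b)}$ along $T_n$, so adjoining $Y$ adds no relations among the $X$-variables; your retraction $\psi$ with $\ker\psi=\langle A\cup B\rangle$ is the rigorous form of this, and it also makes explicit two points the paper leaves implicit: that $\psi(\cF')=\cF$ because $g^2-g\in\cF$ for every $g\in R$, and that $V_{\FF_2}(J_t)$ is the graph of $v\mapsto (e_d^{(a,b)}(v))$ cut out by $C$.
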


\begin{proof}
By construction, the polynomials in $B$ encode the balanced convolution identities,
so recursively
\[
y_d^{(a,b)} = e_d^{(a,b)}
\]
for all $1\le d\le \min(b-a+1,2^L)$ and every $(a,b)\in T_n$. The polynomials in $C$
impose
\[
e_{2^k}^{(1,n)} = t_k\ (0\le k\le L)
\]
according to the binary expansion of $t$. Hence $I_t + \cF\subseteq (J_t + \cF')\cap R$.

Conversely, each auxiliary variable $y_d^{(a,b)}$ is uniquely determined by the
relations in $A\cup B$, recursively along the binary decomposition tree $T_n$.
More precisely, the variables attached to leaves are fixed by $A$, while those attached to
internal intervals are determined recursively from previously defined variables.
Thus, adjoining the variables in $Y$ introduces no new algebraic relations
among the variables in $X$, and therefore
\[
(J_t + \cF')\cap R = I_t + \cF.
\]
\end{proof}

We call the ideal $J_t + \cF'$ defined in Theorem \ref{main3} the {\em Convolution Hamming ideal
of weight $t$}. Since the elementary symmetric functions $e_{2^k}^{(1,n)}$ are the only
ones appearing in the Hamming ideal $I_t + \cF$, one can use Lucas' Theorem to factor
any ESF in terms of ESFs of degree a power of 2, modulo the field-equation ideal $\cF$.

\section{Factorized Convolution Hamming ideals}

An immediate consequence of Lucas' Theorem modulo 2 is the following factorization identity
among Boolean elementary symmetric functions.

\begin{theorem}[Factorization identity]
Let $0\leq d\leq n$ and let $d = \sum_k d_k 2^k$ be its binary expansion. Then
\[
e_d(x_1,\ldots,x_n)\equiv \prod_{k\mid d_k = 1} e_{2^k}(x_1,\ldots,x_n)\ \mod\ \cF
\]
\end{theorem}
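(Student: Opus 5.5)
The plan is to reduce the congruence modulo $\cF$ to an equality of Boolean functions on $V = \FF_2^n$, and then to check that equality pointwise using Lemma \ref{lemma1} and Lucas' Theorem modulo 2.

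\textbf{Step 1 (reduction to functions).} Work in the quotient $R/\cF$. Each class has a unique multilinear representative, and the evaluation map $R/\cF \to \{\text{functions } V\to\FF_2\}$ is an isomorphism. This is the finite-field Nullstellensatz already quoted in the paper: $\cF$ is radical with $V(\cF) = V$. Hence two polynomials are congruent modulo $\cF$ if and only if they take the same value at every $v \in V$. So it suffices to show
\[
e_d(v) = \prod_{k\mid d_k=1} e_{2^k}(v) \quad \text{for all } v\in V.
\]

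\textbf{Step 2 (pointwise evaluation).} Fix $v\in V$ and put $t = \wt(v)$ with binary expansion $t = \sum_k t_k 2^k$.
\begin{itemize}
\item By Lemma \ref{lemma1} together with Lucas' Theorem modulo 2,
\[
e_d(v) \equiv \binom{t}{d} \equiv \prod_k \binom{t_k}{d_k} \ \mod\ 2.
\]
\item The factors with $d_k = 0$ equal $1$. The factors with $d_k = 1$ equal $\binom{t_k}{1} = t_k$. Therefore $e_d(v) = \prod_{k\mid d_k=1} t_k$.
\item By Theorem \ref{main1}, $t_k = e_{2^k}(v)$. Substituting gives the required pointwise identity.
\end{itemize}

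\textbf{Edge cases.} If $d = 0$, the product is empty and both sides equal $1$. If some index $k$ with $d_k = 1$ satisfies $2^k > n$, this cannot occur, since $d \le n$ forces $2^k \le d \le n$. Even so, the convention $e_{2^k} = 0$ would agree with $t_k = 0$, because $t \le n$.

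\textbf{Main obstacle.} The step needing the most care is Step 1, namely justifying that vanishing on all of $\FF_2^n$ implies membership in $\cF$. I would argue this by reducing both sides to multilinear normal form modulo $x_i^2 - x_i$. A nonzero multilinear polynomial over $\FF_2$ does not vanish identically on $V$: take a minimal monomial in its support and evaluate at the indicator vector of that monomial's support.
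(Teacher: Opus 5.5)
Your proposal is correct and takes essentially the same route as the paper. You evaluate pointwise on $V$ using Lemma~\ref{lemma1} and Lucas' theorem, then pass from equality on all of $V$ to congruence modulo $\cF$. Your version adds two small refinements: you justify that last passage explicitly via multilinear normal forms, which the paper leaves implicit, and you cite Theorem~\ref{main1} directly for $e_{2^k}(v)=t_k$ instead of re-deriving it.
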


\begin{proof}
Let $v = (v_1,\ldots,v_n)\in V$ and put $t = \wt(v)$. By Lemma \ref{lemma1}, we have
\[
e_d(v)\equiv \binom{t}{d}\ \mod\ 2
\]
Moreover, by Lucas' theorem
\[
\binom{t}{d} \equiv \prod_k \binom{t_k}{d_k}\ \mod\ 2
\]
where $t = \sum_k t_k 2^k$. We conclude that
\[
e_d(v) = 1 \iff t_k\geq d_k,\ \forall\ k \iff d_k = 1\ \text{implies that}\ t_k = 1,\ \forall\ k
\]
By using again Lemma \ref{lemma1} and Lucas' Theorem, we obtain
\[
e_{2^k}(v) \equiv \binom{t}{2^k}\ \mod\ 2\equiv t_k\ \mod\ 2
\]
It follows that
\[
e_{2^k}(v) = 1 \iff t_k = 1
\]
Let us consider the product
\[
\prod_{k\mid d_k = 1} e_{2^k}(v)
\]
This product is equal to one if and only if $e_{2^k}(v) = 1$ for all $k$ such that $d_k = 1$,
or equivalently $t_k = 1$ for all such $k$. We conclude that
\[
e_d(v) = 1 \iff \prod_{k\mid d_k = 1} e_{2^k}(v) = 1
\]
In other words, we have that $e_d(v) = \prod_{k\mid d_k = 1} e_{2^k}(v)$, for all $v\in V$ and
therefore
\[
e_d(x_1,\ldots,x_n)\equiv \prod_{k\mid d_k = 1} e_{2^k}(x_1,\ldots,x_n)\ \mod\ \cF
\]
\end{proof}

Applying the factorization identity to the convolution identity yields the following result.

\begin{theorem}[Factorized Convolution identity]
Let $n\geq 2$ and $1\leq m < n$ be integers. Let $0\leq d\leq n$. The following identity holds
\[
e_d^{(1,n)} \equiv
\sum_{0\leq j\leq d}
( \prod_{k\mid j_k = 1} e_{2^k}^{(1,m)} )
( \prod_{k\mid (d - j)_k = 1} e_{2^k}^{(m+1,n)} )
\ \mod\ \cF
\]
where $j = \sum_k j_k 2^k$ and $d - j = \sum_k (d - j)_k 2^k$ are the binary expansions
of $j$ and $d-j$, respectively.
\end{theorem}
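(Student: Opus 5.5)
The plan is to combine the convolution identity with the factorization identity, applied separately to each block of variables, and then reduce modulo $\cF$ term by term.

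First, I will apply the convolution identity (the Proposition on convolution) with the split $\{x_1,\ldots,x_m\}\cup\{x_{m+1},\ldots,x_n\}$. This gives the exact polynomial identity
\[
e_d^{(1,n)} = \sum_{0\leq j\leq d} e_j^{(1,m)}\, e_{d-j}^{(m+1,n)}
\]
in $\FF_2[x_1,\ldots,x_n]$, with the usual convention that $e_j^{(a,b)}=0$ when $j>b-a+1$.

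Second, I will apply the factorization identity to each factor separately. It is stated for $e_d(x_1,\ldots,x_n)$, but its proof uses only Lemma \ref{lemma1} and Lucas' theorem evaluated at Boolean points. It therefore holds verbatim for any subset of variables, for instance $x_1,\ldots,x_m$, modulo the field equations of those variables, and these lie in $\cF$. This gives
\[
e_j^{(1,m)}\equiv \prod_{k\mid j_k=1} e_{2^k}^{(1,m)}, \qquad e_{d-j}^{(m+1,n)}\equiv \prod_{k\mid (d-j)_k=1} e_{2^k}^{(m+1,n)} \ \mod\ \cF .
\]
Since $\cF$ is an ideal, congruences modulo $\cF$ are preserved under products and sums. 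Substituting into the convolution sum then yields the claimed congruence.

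The main care point is the range of summation. When $j>m$ the left factor $e_j^{(1,m)}$ is zero by convention, but the product $\prod_{k\mid j_k=1}e_{2^k}^{(1,m)}$ may involve only degrees $2^k\le m$ and so need not be the zero polynomial. I must therefore check that the factorization identity still holds in this case. I would verify it pointwise, as in the proof of the factorization identity. For $v\in\FF_2^m$ with weight $w\le m<j$, Lucas' theorem gives $\binom{w}{j}\equiv\prod_k\binom{w_k}{j_k}$, and $\binom{w}{j}=0$ since $w<j$. Hence some $k$ has $j_k=1$ and $w_k=0$, which makes the corresponding factor $e_{2^k}(v)=w_k=0$ (by Theorem \ref{main1}, or by zero convention when $2^k>m$). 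So the product vanishes on all Boolean points and lies in $\cF$, using that $\cF$ is the ideal of all polynomials vanishing on $V$. The same argument handles $d-j>n-m$.

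Thus the identity holds over the full range $0\le j\le d$. The conventions $e_0=1$ (empty product) and $e_{2^k}=0$ for $2^k$ larger than the block size are consistent throughout.
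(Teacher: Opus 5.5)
Your proposal is correct and follows the paper's route: the paper obtains this identity in one line, by applying the factorization identity to each factor of the convolution identity, which is exactly your substitution argument modulo $\cF$. Your extra check that the terms with $j>m$ or $d-j>n-m$ still lie in $\cF$ is a detail the paper leaves implicit, since its factorization identity is stated only for $0\le d\le n$, and your pointwise Lucas argument handles that case correctly.
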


We now estimate the number of ESFs arising in the factorized convolution scheme obtained by
recursively partitioning the variables into two balanced blocks. The following combinatorial
lemma provides the key bound.

\begin{lemma}
\label{lemma3}
Let $n\geq 2$, and let $T_n$ be the set of intervals in the recursive binary decomposition
tree of $\{1,\ldots,n\}$ defined in Lemma \ref{lemma2}.
Let $0\leq L\leq \lfloor \log_2(n) \rfloor$. For each interval $I\in T_n$, denote by $|I|$
its length and define
\[
\rho(I,L) = \# \{k\geq 0\mid 2^k\leq \min(|I|,2^L)\} =
\min(\lfloor \log_2(|I|) \rfloor, L) + 1
\]
Set $M(n,L) = \sum_{I\in T_n} \rho(I,L)$. Then
\[
M(n,L) = \cO(n)
\]
uniformly in $L$, that is, there exists an absolute constant $C > 0$ such that
\[
M(n,L)\leq C n
\]
for all $n\geq 2$ and all $0\leq L\leq \lfloor \log_2(n) \rfloor$.
\end{lemma}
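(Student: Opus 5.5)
The plan is to follow the level-by-level decomposition used in the proof of Lemma \ref{lemma2}, but to bound each interval's contribution by its logarithmic length instead of by $2^L$. Since $\rho(I,L)\leq \lfloor\log_2(|I|)\rfloor + 1$ for every $L$, it suffices to prove
\[
M(n) := \sum_{I\in T_n} (\lfloor \log_2 |I| \rfloor + 1) \leq C n ,
\]
and uniformity in $L$ is then automatic.

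\emph{Step 1: sizes of intervals at a fixed level.} As in Lemma \ref{lemma2}, let $Z_k$ be the set of intervals at level $k$, for $0\leq k\leq D=\lceil\log_2 n\rceil$. The balanced split $m=\lfloor (a+b)/2\rfloor$ produces children of lengths $\lceil |I|/2\rceil$ and $\lfloor |I|/2\rfloor$. By induction on $k$, every interval in $Z_k$ therefore has length at most $\lceil n/2^k\rceil$. Also $|Z_k|\leq 2^k$ and, more usefully, $|Z_k|\leq n$, since the intervals at a level are disjoint and nonempty. Hence
\[
\sum_{I\in Z_k}(\lfloor\log_2|I|\rfloor+1)\;\leq\; \min(2^k,n)\,\bigl(\log_2\lceil n/2^k\rceil + 1\bigr).
\]

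\emph{Step 2: summing over levels.} Set $D_0=\lfloor \log_2 n\rfloor$. For $k\leq D_0$ we have $\lceil n/2^k\rceil\leq 2n/2^k$, so the level-$k$ term is at most $2^k(D_0-k+3)$, using $\log_2 n\leq D_0+1$. Substituting $j=D_0-k$ gives
\[
\sum_{k\leq D_0}2^k(D_0-k+3) = 2^{D_0}\sum_{j\geq 0}2^{-j}(j+3)\leq 2^{D_0}\cdot 8\leq 8n .
\]
The only remaining level is $k=D_0+1=D$, which occurs when $n$ is not a power of two. There all intervals have length at most $\lceil n/2^{D}\rceil \le 1$, so each contributes $1$, and there are at most $n$ such intervals, giving at most $n$. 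Altogether $M(n,L)\leq M(n)\leq 9n$, so one may take $C=9$.

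\emph{Main obstacle.} The delicate step is the uniform length bound $|I|\leq\lceil n/2^k\rceil$ under the floor/ceiling splitting. It follows from the identity $\lceil\lceil x\rceil/2\rceil=\lceil x/2\rceil$ for real $x$, but it has to be set up carefully so that the depth matches $D=\lceil\log_2 n\rceil$. The difficulty is that the tree need not be complete: leaves of length $1$ may appear at different depths. This does no harm, because the argument only needs upper bounds on the number of intervals per level and on their lengths, and a leaf simply contributes nothing to deeper levels. A cleaner alternative, if the level bookkeeping gets messy, is a direct induction using the recursion
\[
M(n)=M(\lceil n/2\rceil)+M(\lfloor n/2\rfloor)+\lfloor\log_2 n\rfloor+1,
\]
with the strengthened hypothesis $M(n)\leq Cn-\log_2 n - c$. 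The subtracted logarithm is what absorbs the additive $\log$ term at each node, in the usual way for such recurrences.
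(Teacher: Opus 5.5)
Your proof is correct and follows essentially the same route as the paper. Both arguments drop the dependence on $L$ via $\rho(I,L)\le \log_2|I|+1$, bound lengths at depth $d$ by $\lceil n/2^d\rceil$ and the number of intervals by $2^d$, and sum a series of the form $\sum_d 2^d(D-d+O(1))$ against $2^D=O(n)$. The differences are only in bookkeeping: you split off the last level when $n$ is not a power of two, using $|Z_D|\le n$, and get $C=9$, whereas the paper uses the single bound $\log_2|I|\le D-d+1$ at every depth and gets $C=12$.
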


\begin{proof}
Recall that the set $T_n$ is constructed recursively by splitting each interval $I = (a,b)$ ($a < b$)
into
\[
(a,m) \cup (m+1,b)
\]
where $m = \lfloor \frac{a+b}{2} \rfloor$. Hence every internal node has exactly two children,
and the depth of the tree is $D = \lceil \log_2(n) \rceil$. For every interval $I\in T_n$, we have
\[
\rho(I,L) = \min(\lfloor \log_2(|I|)\rfloor, L) + 1\leq \log_2(|I|) + 1
\]
Therefore, it holds
\[
M(n,L)\leq \sum_{I\in T_n} (\log_2(|I|) + 1)
\]
For each depth $0\leq d\leq D$, let $Z_d$ denote the set of intervals at depth $d$.
Since every node has at most two children, we have
\[
|Z_d|\leq 2^d
\]
Moreover, by construction of the balanced decomposition, every interval $I\in Z_d$ satisfies
\[
|I|\leq \left\lceil \frac{n}{2^d} \right\rceil \leq \frac{n}{2^d} + 1
\]
Hence
\[
\log_2(|I|) \leq \log_2(\frac{n}{2^d} + 1)
\]
Since $D = \lceil \log_2(n) \rceil$, we have $n\leq 2^D$ and therefore
$\frac{n}{2^d} + 1\leq 2^{D-d} + 1$.

Because $d\leq D$, we have $2^{D-d}\geq 1$, hence $2^{D-d} + 1\leq 2^{D-d+1}$. We conclude
\[
\log_2(|I|)\leq D-d+1
\]
Summing over all intervals at depth $d$, we obtain
\[
\sum_{I\in Z_d} (\log_2(|I|) + 1)\leq 2^d(D-d+2)
\]
Summing over all depths,
\[
M(n,L)\leq \sum_{0\leq d\leq D} 2^d(D-d+2)
\]
Factoring out $2^D$, we have
\[
M(n,L)\leq 2^D\cdot \sum_{0\leq d\leq D} \frac{D-d+2}{2^{D-d}}
\]
Setting $r = D - d$, we obtain
\[
M(n,L)\leq 2^D\cdot \sum_{0\leq r\leq D} \frac{r+2}{2^r}
\]
Since
\[
\sum_{r=0}^{\infty}\frac{r+2}{2^r} =
\sum_{r=0}^{\infty}\frac{r}{2^r} + 2\sum_{r=0}^{\infty}\frac1{2^r} = 2 + 4 = 6
\]
we conclude that
\[
M(n,L)\leq 6\cdot 2^D
\]
Finally, since $2^D\leq 2 n$, we obtain
\[
M(n,L)\leq 12 n
\]
\end{proof}

An immediate consequence of Lemma \ref{lemma3} and the factorized convolution identity is
the following proposition.

\begin{proposition}
\label{main4}
Let $n\geq 2$ and $0\leq L\leq \lfloor \log_2(n) \rfloor$ be integers. Consider the recursive
construction of the elementary symmetric functions $e_{2^k}^{(1,n)}$ for $0\leq k \leq L$
obtained by repeated application of the factorized convolution identity along the recursive
binary decomposition tree $T_n$. Let $M(n,L)$ denote the total number of ESF instances
generated by this construction. Then
\[
M(n,L) = \sum_{I\in T_n} \rho(I,L)
\]
where $\rho(I,L) = \# \{k\geq 0 \mid 2^k \leq \min(|I|,2^L)\} =
\min(\lfloor \log_2(|I|) \rfloor, L) + 1$. By Lemma \ref{lemma3}, it follows that
\[
M(n,L) = \cO(n)
\]
uniformly in $L$.
\end{proposition}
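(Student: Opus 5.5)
The argument mirrors the proof of Proposition \ref{main2}: first identify, node by node, which ESF instances the factorized recursion needs, and then invoke Lemma \ref{lemma3} for the bound. The main point to establish is that at each interval $I=(a,b)\in T_n$ the construction only needs the power-of-two ESFs $e_{2^k}^{(a,b)}$ with $2^k\leq\min(|I|,2^L)$. I would argue by top-down induction on $T_n$. At the root only $e_{2^k}^{(1,n)}$ with $0\leq k\leq L$ are required, and $2^k\le 2^L\le n$ since $L\leq\lfloor\log_2(n)\rfloor$. For the inductive step, assume that at a node $I$ we need $e_{2^k}^{(a,b)}$ for some $2^k\leq\min(|I|,2^L)$. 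Expand it with the factorized convolution identity (with $m=\lfloor (a+b)/2\rfloor$):
\[
e_{2^k}^{(a,b)} \equiv \sum_{0\leq j\leq 2^k} \Bigl(\prod_{i\mid j_i=1} e_{2^i}^{(a,m)}\Bigr)\Bigl(\prod_{i\mid (2^k-j)_i=1} e_{2^i}^{(m+1,b)}\Bigr) \ \mod\ \cF .
\]
Every factor on the right is an ESF of degree $2^i\leq j\leq 2^k\leq 2^L$ on a child interval. Terms with $j>|(a,m)|$ or $2^k-j>|(m+1,b)|$ vanish by the convention $e_d=0$ for $d$ exceeding the number of variables; I will justify this mod $\cF$ via the factorization identity together with Lemma \ref{lemma1}. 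Hence each surviving factor also has $2^i\leq$ the child's length. So the required set at a child $I'$ is contained in $\{e_{2^i}^{(I')}\mid 2^i\leq\min(|I'|,2^L)\}$, whose cardinality is exactly $\rho(I',L)$.

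Next I will show this containment is an equality, i.e.\ every such ESF actually appears. For a child $I'$, the $j=2^i$ term (placing all weight on that side, with $2^i\le |I'|$) shows that $e_{2^i}^{(I')}$ occurs whenever the parent requests a degree at least $2^i$. Taking $2^k$ to be the largest degree requested at the parent, which by induction equals the largest power of two at most $\min(|I|,2^L)\geq\min(|I'|,2^L)$, covers every $i$ with $2^i\leq\min(|I'|,2^L)$. For leaves, $\rho=1$ corresponds to $e_1^{(a,a)}=x_a$.

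Summing $\rho(I,L)$ over all $I\in T_n$ then gives $M(n,L)$, and Lemma \ref{lemma3} yields $M(n,L)\leq 12n=\cO(n)$ uniformly in $L$. The main obstacle I expect is the vanishing of out-of-range terms modulo $\cF$. The Boolean products of power-of-two ESFs do not literally vanish, so I must check that the factorization identity is applied only to degrees $d\leq|I'|$, where it is valid. Equivalently, the sum in the identity should be restricted to the range $\max(0,d-(n-m))\leq j\leq\min(d,m)$ before factorizing, exactly as in the convolution identity.
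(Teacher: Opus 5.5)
Your proposal is correct and follows the same route as the paper. You identify at each node $I\in T_n$ the $\rho(I,L)$ power-of-two ESFs $e_{2^k}^{I}$ with $2^k\le\min(|I|,2^L)$ that the factorized recursion needs, sum over $T_n$, and apply Lemma \ref{lemma3}; your top-down induction with its containment/equality check just makes explicit what the paper asserts in one line.

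One small repair is needed in the equality step. Since you discard out-of-range terms, obtain $e_{2^i}^{I'}$ from the parent's request of degree exactly $2^i$ rather than from the largest requested degree. That request is guaranteed by your inductive hypothesis, and it yields the single in-range term $e_{2^i}^{I'}$. The largest degree does not always work: for $I=(1,4)$ and $L=2$, the only surviving term of $e_4^{(1,4)}$ is $e_2^{(1,2)}e_2^{(3,4)}$, so $e_1^{(1,2)}$ must come from $e_1^{(1,4)}$.
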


\begin{proof}
For each interval $I\in T_n$, the factorized convolution formula for $e_{2^k}^I$
decomposes the computation into contributions from the two balanced subintervals of $I$.
Unlike the standard convolution, the factorized version ensures that, at each recursive step,
only contributions compatible with powers of two degrees are propagated through the tree.

Since $0\leq k\leq L$, the recursion at node $I$ generates ESF instances corresponding to all
levels $k$ such that the degree $2^k$ is admissible with respect to both $|I|$ and the
truncation degree $2^L$. The number of such levels is therefore
\[
\rho(I,L) = \# \{k\geq 0 \mid 2^k \leq \min(|I|,2^L)\}
\]
Summing over all intervals $I\in T_n$, we obtain
\[
M(n,L) = \sum_{I\in T_n} \rho(I,L)
\]
The bound $M(n,L) = \cO(n)$ follows directly from Lemma \ref{lemma3}.
\end{proof}

Using the complexity compression provided by Lemma \ref{lemma3} applied to the recursive
construction of Proposition \ref{main4}, we obtain a compact version of a Convolution
Hamming ideal. The proof follows the same arguments as in Theorem \ref{main3}.

\begin{theorem}
\label{main5}
Let $0 \leq t \leq n$ be an integer and let $t = \sum_{0\leq k\leq L} t_k 2^k$
be its binary expansion, where $L = \lceil\log_2( \max(t,n-t) + 1)\rceil - 1$.
Let $T_n$ be the recursive binary decomposition tree of $\{1,\ldots,n\}$.
For each interval $I = (a,b)\in T_n$, we define
\[
\rho(I,L) = \min(\lfloor \log_2 |I| \rfloor, L) + 1
\]
Consider the sets of variables
\[
\begin{array}{c@{\ }c@{\ }l}
X & = & \{x_1,\ldots,x_n\}, \\
Y & = & \{y_{2^k}^{(a,b)} \mid (a,b)\in T_n,\ 0\leq k < \rho((a,b),L)\}
\end{array}
\]
Let $R = \FF_2[X]$ and $R' = \FF_2[X \cup Y]$, and define the following
sets of polynomials
\[
\begin{array}{c@{\ }c@{\ }l}
A & = & \{ y_1^{(a,a)} - x_a \mid 1 \leq a \leq n \}, \\
B & = & \{ y_{2^k}^{(a,b)} - f_{2^k}^{(a,b)} \mid
(a,b)\in T_n,\ a < b,\ 0 \leq k < \rho((a,b),L) \}, \\
C & = & \{ y_{2^k}^{(1,n)} - t_k \mid 0 \leq k \leq L \}
\end{array}
\]
where
\[
f_{2^k}^{(a,b)} =
\sum_{0\leq j\leq 2^k}
( \prod_{h\mid j_h = 1} y_{2^h}^{(a,m)} )
( \prod_{h\mid (2^k - j)_h = 1} y_{2^h}^{(m+1,b)} )
\]
with $m = \lfloor \frac{a+b}{2} \rfloor$ and $j = \sum_h j_h 2^h,\,
2^k - j = \sum_h (2^k - j)_h 2^h$ denoting the binary expansions of the integers
$j,\, 2^k-j$, respectively.
As usual, we adopt the convention that $y_d^{(a,b)} = 0$ whenever $d > b-a+1$.
Finally, put
\[
M = M(n,L) = \sum_{I\in T_n} \rho(I,L)
\]
and define the ideal $J_t = \langle A \cup B \cup C \rangle \subset R'$.
We have
\[
(J_t + \cF')\cap R = I_t + \cF,\,
\varphi( V_{\FF_2}(J_t) ) = V_{\FF_2}(I_t) = H_t
\]
where $\varphi:\FF_2^{n + M} \to \FF_2^n$ denotes the canonical projection onto
the $X$-coordinates.
\end{theorem}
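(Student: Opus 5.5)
We work modulo the field equations $\cF'$ throughout and argue on $\FF_2$-points, using the fact recalled after the definition of Hamming ideals: for any ideal $J$, $J+\cF'$ is radical and $V(J+\cF') = V_{\FF_2}(J)$. The first step is a ``triangular'' observation. Order the intervals of $T_n$ by length, so that every node comes after its two children. Each generator in $A\cup B$ then has the form $y_{2^k}^{(a,b)} - g$, where $g$ involves only the $x$'s (for leaves) or only variables attached to the strict subintervals $(a,m)$ and $(m+1,b)$ (for internal nodes). Consequently, modulo $\langle A\cup B\rangle$, every $y$-variable is congruent to a polynomial $\phi(y_{2^k}^{(a,b)})\in R$ obtained by recursive substitution. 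This defines an $R$-algebra map $\phi: R'\to R$ with $\ker\phi = \langle A\cup B\rangle$: every element of $R'$ reduces modulo $A\cup B$ to its image in $R$, and $\phi$ restricts to the identity on $R$.

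The second step is to identify $\phi(y_{2^k}^{(a,b)}) \equiv e_{2^k}^{(a,b)} \bmod \cF$. We prove this by induction on $|(a,b)|$. For leaves it is exactly $A$. For an internal node, by induction the substituted form of $f_{2^k}^{(a,b)}$ is congruent modulo $\cF$ to
\[
\sum_{0\le j\le 2^k}\Bigl(\prod_{h\mid j_h=1} e_{2^h}^{(a,m)}\Bigr)\Bigl(\prod_{h\mid(2^k-j)_h=1} e_{2^h}^{(m+1,b)}\Bigr).
\]
By the Factorized Convolution identity, this is congruent to $e_{2^k}^{(a,b)}$. We must also check that every variable $y_{2^h}^{(a,m)}$ appearing in $f_{2^k}^{(a,b)}$ is actually in $Y$, or is zero by convention. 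Here $2^h\le 2^k\le 2^L$ always, and if $2^h>|(a,m)|$ the variable is $0$, matching $e_{2^h}^{(a,m)}=0$. The only case needing care is $2^h\le|(a,m)|$, which gives $h\le \lfloor\log_2|(a,m)|\rfloor$ and $h\le L$, so $h<\rho((a,m),L)$. The same argument applies to $(m+1,b)$. Note that the factorization identity is applied in the variables of each subinterval separately, where it holds modulo $\cF$.

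The third step is the elimination. Since $\phi$ is a surjection with kernel $\langle A\cup B\rangle$ fixing $R$, we get $(\langle A\cup B\rangle + \langle S\rangle)\cap R = \langle \phi(S)\rangle$ for any $S\subset R'$. We apply this with $S = C\cup\{z^2-z : z\in X\cup Y\}$. Then $\phi(C)$ is congruent modulo $\cF$ to $\{e_{2^k}-t_k\}$, which generates $I_t$ by the reduced description following the theorem on the minimal $L$. Moreover, $\phi(z^2-z) = \phi(z)^2-\phi(z)$ lies in $\cF$, because over $\FF_2$ every polynomial $p$ satisfies $p^2-p\in\cF$. Hence $(J_t+\cF')\cap R = I_t+\cF$. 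The statement about varieties follows from the same map: a point of $V_{\FF_2}(A\cup B)$ is determined by its $X$-coordinates via $y\mapsto\phi(y)(x)$. So $\varphi$ is a bijection between $V_{\FF_2}(J_t)$ and the set of $v\in V$ with $e_{2^k}(v)=t_k$, which is $V_{\FF_2}(I_t)=H_t$ by Theorem \ref{main1}.

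The main obstacle is the second step: the bookkeeping needed to show that the truncation to $k<\rho(I,L)$ never drops a variable that the recursion actually needs, together with the correct handling of the zero conventions. A secondary subtlety is that the generators in $C$ only use $k\le L$. This is where the theorem on the minimal $L$ is essential, since it guarantees that these generators already cut out $H_t$.
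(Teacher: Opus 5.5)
Your proof is correct and follows the same route as the paper. The paper defers to the argument of Theorem \ref{main3}: the auxiliary variables are recursively determined by $A\cup B$ and coincide with the corresponding elementary symmetric functions, so eliminating them adds no new relations on $X$ and leaves exactly $I_t+\cF$. Your substitution homomorphism $\phi$ with kernel $\langle A\cup B\rangle$, together with the identity $(\langle A\cup B\rangle+\langle S\rangle)\cap R=\langle\phi(S)\rangle$, makes that informal step rigorous, and you rightly note that here $y_{2^k}^{(a,b)}$ agrees with $e_{2^k}^{(a,b)}$ only modulo $\cF$.
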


We refer to the ideal $J_t + \cF'$ constructed in Theorem \ref{main5} as the {\em Factorized
Convolution Hamming ideal of weight t}.

\section{Quadratic Factorized Convolution Hamming ideals}

A straightforward consequence of the factorized convolution formula is the following result.

\begin{corollary}[Quadratic Factorized Convolution identities]
Let $n\geq 2$ and $1\leq m < n$. For every subset $S\subset\NN$, we define
polynomials in $R = \FF_2[x_1,\ldots,x_n]$ by
\[
c^{(1,m)}_S = \prod_{k\in S} e_{2^k}^{(1,m)}
\]
with the convention that $c^{(1,m)}_\emptyset = 1$. Equivalently, the polynomials
$c^{(1,m)}_S$ are defined recursively by
\[
c^{(1,m)}_{S\cup\{k\}} = c^{(1,m)}_S e_{2^k}^{(1,m)}\ (k\notin S)
\]
Similarly, we define polynomials $c^{(m+1,n)}_S\in R$. Then,
the elementary symmetric function of degree $d\geq 0$ satisfies
the following quadratic identity
\[
e_{d}^{(1,n)} \equiv \sum_{0\leq j\leq d} c^{(1,m)}_{\{k\mid j_k = 1\}}
c^{(m+1,n)}_{\{k\mid (d-j)_k = 1\}}
\ \mod\ \cF
\]
\end{corollary}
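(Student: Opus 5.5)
The corollary is a direct rewriting of the Factorized Convolution identity, so the plan is to reduce to it and check that the notation matches. Two points need verification: that the recursive definition of $c_S$ agrees with the product definition, and that the summands are the same.

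First I will check the two definitions of $c^{(1,m)}_S$. For finite $S$, induction on $|S|$ works: the base case is $c_\emptyset = 1$, the empty product. For the step, $c_{S\cup\{k\}} = \prod_{h\in S\cup\{k\}} e_{2^h}^{(1,m)} = c_S\, e_{2^k}^{(1,m)}$ when $k\notin S$, because $R$ is commutative. The sets that occur, $\{k\mid j_k=1\}$ with $j\leq d$, are finite, so nothing more is needed. The same argument handles $c^{(m+1,n)}_S$.

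Next I will match the summands. By definition, $\prod_{k\mid j_k=1} e_{2^k}^{(1,m)} = c^{(1,m)}_{\{k\mid j_k=1\}}$ and $\prod_{k\mid (d-j)_k=1} e_{2^k}^{(m+1,n)} = c^{(m+1,n)}_{\{k\mid (d-j)_k=1\}}$. Substituting these into the Factorized Convolution identity gives the stated congruence modulo $\cF$ at once. Some cases fall outside that identity's range $0\leq d\leq n$:
\begin{itemize}
\item If $d>n$, the left side is $0$. I will handle this by applying the factorization identity to each summand of the convolution identity: the convolution identity extends to all $d\geq 0$ under the convention $e_k=0$ for $k$ beyond the number of variables.
\item The factorization identity itself must hold when $d$ exceeds the number of variables. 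It does: modulo $\cF$, a product of $e_{2^k}$ whose degrees sum to more than the number of variables vanishes at every point by Lucas' theorem, since then $d$ has a binary digit not dominated by those of $t$.
\end{itemize}

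The word ``quadratic'' refers to reading the identity with the $c_S$ treated as new variables. Each defining relation $c_{S\cup\{k\}} = c_S e_{2^k}$ is then of degree $2$, and each summand $c_S c_{S'}$ of $e_d^{(1,n)}$ is also of degree $2$. I will note this in a remark; it is a description of the form of the identity, not a mathematical claim requiring proof.

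The only place where any care is needed is the edge case where $d$ or $2^k$ exceeds the block size. I expect to settle it pointwise on $V$, using Lemma \ref{lemma1} to evaluate both sides as binomial coefficients modulo $2$. Everything else is direct substitution.
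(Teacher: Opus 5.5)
Your proposal is correct and follows the paper's route: the corollary comes straight from the Factorized Convolution identity by naming the products $\prod_{k\in S} e_{2^k}$ as $c_S$. Your extra pointwise check (via Lemma~\ref{lemma1} and Lucas' theorem) covers the cases where $d$, $j$ or $d-j$ exceed the relevant number of variables, a detail the paper leaves implicit since its convolution and factorization identities are only stated for $0\le d\le n$.
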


We now study the number of non-constant polynomials $c_S^{(a,b)}$ ($S\neq\emptyset$)
arising in the quadratic factorized convolution scheme obtained by recursively
partitioning the intervals into two balanced blocks. A quantitative bound is provided
by the following result.

\begin{lemma}
\label{lemma4}
Let $n\geq 2$, and consider $T_n$, the set of intervals in the recursive binary
decomposition tree of $\{1,\ldots,n\}$ defined in Lemma \ref{lemma2}.
Let also $0 < L\leq \lfloor \log_2(n) \rfloor$. For each interval $I\in T_n$, define
as in Lemma \ref{lemma3}
\[
\rho(I,L) = \min(\lfloor \log_2(|I|)\rfloor, L) + 1
\]
By defining
\[
C(n,L) = \sum_{I\in T_n} (2^{\rho(I,L)} - 1)
\]
we have that $C(n,L) = \cO( n L )$.
\end{lemma}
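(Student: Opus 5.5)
The plan is to bound each term by something linear in its interval length and capped at $2^L$, and then reuse the level-by-level splitting from the proof of Lemma \ref{lemma2}. The starting point is
\[
2^{\rho(I,L)} - 1 \leq 2^{\rho(I,L)} = 2\cdot 2^{\min(\lfloor \log_2|I|\rfloor,L)} \leq 2\min(|I|,2^L),
\]
which holds because $2^{\lfloor \log_2|I|\rfloor}\leq |I|$. Summing over $I\in T_n$ then gives directly
\[
C(n,L)\leq 2\sum_{I\in T_n}\min(|I|,2^L) = 2N(n,L).
\]

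Lemma \ref{lemma2} gives $N(n,L)\leq n(L+4)$, so $C(n,L)\leq 2n(L+4)$. Since $L\geq 1$ by hypothesis, we have $L+4\leq 5L$, and therefore $C(n,L)\leq 10nL = \cO(nL)$. The hypothesis $0<L$ is needed only to absorb the additive constant into $nL$.

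There is no real obstacle here. The only point to check is the inequality $2^{\min(\lfloor\log_2|I|\rfloor,L)}\leq \min(|I|,2^L)$, which follows because $x\mapsto 2^x$ is monotone and commutes with $\min$.

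As a cross-check, one can also argue by levels, as in Lemma \ref{lemma2}. At depth $d$, each of the at most $2^d$ intervals contributes at most $2^{L+1}$ when $d\leq D-L$, which gives $\cO(2^D)=\cO(n)$ in total over those levels. At deeper levels, the fact that $|I|\leq \lceil n/2^d\rceil$ bounds each level's total by $\cO(n)$, and there are at most $L$ such levels.
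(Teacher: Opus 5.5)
Your proof is correct, but it takes a different route from the paper. The paper does not invoke Lemma \ref{lemma2}. It redoes a level-by-level count: from the depth bound $\lfloor\log_2|I|\rfloor\le D-d+1$ (taken from the proof of Lemma \ref{lemma3}) it gets $2^{\rho(I,L)}\le 2^{\min(D-d+2,\,L+1)}$ for $I\in Z_d$, and then splits at $d=D-L$. The shallow levels sum to at most $2^{D+2}\le 8n$, and each of the at most $L$ deep levels contributes at most $2^d\cdot 2^{D-d+2}=2^{D+2}$, giving $C(n,L)\le 8n(L+1)$. Your cross-check paragraph is essentially this argument.

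Your main argument instead uses the pointwise inequality $2^{\rho(I,L)}\le 2\min(|I|,2^L)$ to reduce the claim directly to Lemma \ref{lemma2}, giving $C(n,L)\le 2N(n,L)\le 2n(L+4)$. What this buys:
\begin{itemize}
\item It is shorter.
\item It has a better leading constant ($2nL$ versus $8nL$).
\item It exposes a structural fact: the number of non-constant products $c_S^I$ in the QFC scheme is at most twice the number of ESF instances in the plain convolution scheme of Proposition \ref{main2}.
\end{itemize}
In exchange, the paper's version is self-contained. It needs only the per-depth length bound $|I|\le\lceil n/2^d\rceil$ and the node count $|Z_d|\le 2^d$, not the pairwise disjointness of the intervals at each level that Lemma \ref{lemma2} relies on.

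Your use of $L\ge 1$ to absorb the additive term is exactly right, and the hypothesis is genuinely needed: for $L=0$ one has $C(n,0)=|T_n|=2n-1$. The paper absorbs the constant in the same way, implicitly, through the form $8n(L+1)$.
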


\begin{proof}
The set $T_n$ is obtained from a recursive binary decomposition of $\{1,\ldots,n\}$,
where each interval $I = (a,b)$ ($a < b$) is split as $I = (a,m) \cup (m+1,b)$ with
$m = \lfloor \frac{a+b}{2} \rfloor$.  Hence, the resulting tree has depth $D = \lceil \log_2(n) \rceil$,
and at each depth $0\leq d\leq D$ there are at most $2^d$ intervals.

For an interval $I\in T_n$ at depth $d$, as in the Lemma \ref{lemma3}
we obtain
\[
\lfloor \log_2(|I|) \rfloor \leq D - d + 1
\]
Consequently,
\[
\rho(I,L) = \min(\lfloor \log_2(|I|)\rfloor, L) + 1 \leq \min(D-d+1, L) + 1
\]

We aim to estimate
\[
C(n,L) = \sum_{I\in T_n} (2^{\rho(I,L)} - 1)
\leq \sum_{I\in T_n} 2^{\rho(I,L)}
\]
We split the sum by depth $d$ and denote by $Z_d$ the set of intervals at depth $d$.
Then $|Z_d|\leq 2^d$, and we have
\[
\begin{aligned}
C(n,L) &\leq \sum_{0\leq d\leq D} \sum_{I\in Z_d} 2^{\rho(I,L)}
\leq \sum_{0\leq d\leq D} 2^d \cdot 2^{\min(D-d+1, L) + 1} \\
      &\leq \sum_{0\leq d\leq D} 2^d \cdot 2^{\min(D-d+2, L+1)}
\end{aligned}
\]
We distinguish two cases.

Case 1: $d \le D-L$. Then $D - d + 2\geq L + 2$, hence $\min(D - d + 2,  L + 1) = L + 1$, and
\[
\sum_{I\in Z_d} 2^{\rho(I,L)}\leq 2^d \cdot 2^{L+1}
\]
Summing over $0\leq d\leq D-L$, we obtain
\[
\sum_{0\leq d\leq D-L} 2^d 2^{L+1} \leq 2^{L+1}\cdot \sum_{0\leq d\leq D-L} 2^d
\leq 2^{L+1} \cdot 2^{D-L+1} = 2^{D+2} \leq 8n
\]

Case 2: $d > D-L$. Let $r = D - d$. Then $0 \leq r < L$, and
\[
\min(D - d + 2, L + 1) \leq D - d + 2 = r + 2
\]
Thus, since $d + r = D$
\[
\sum_{I\in Z_d} 2^{\rho(I,L)} \leq 2^d \cdot 2^{r+2} = 2^{D+2}
\]
Summing over the $L$ deepest levels,
\[
\sum_{D-L+1\leq d\leq D} 2^{D+2} \leq L \cdot 2^{D+2} \leq 8 n L
\]

Combining both cases, we obtain
\[
C(n,L) \leq 8n + 8nL = 8 n (L + 1)
\]
which proves the claim.
\end{proof}

Combining Lemma \ref{lemma4} and Proposition \ref{main4}, we derive the following result
on the recursive application of quadratic factorized convolution identities.

\begin{proposition}
\label{main6}
Let $n\geq 2$ and $0 < L\leq \lfloor \log_2(n) \rfloor$ be integers. Consider the recursive
construction of the elementary symmetric functions $e_{2^k}^{(1,n)}$ for $0\leq k \leq L$
obtained by repeated application of the quadratic factorized convolution scheme along
the recursive binary decomposition tree $T_n$. Let $M'(n,L)$ denote the total number of
elementary symmetric functions $e_{2^k}^I$ together with non-constant polynomials
$c_S^I$ $(I\in T_n, S\subset \{0,\ldots,\rho(I,L)-1\})$ generated by this construction.
Then
\[
M'(n,L) = \cO(n L)
\]
\end{proposition}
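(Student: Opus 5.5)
The plan is to split $M'(n,L)$ into two contributions and bound each with an earlier lemma. We write
\[
M'(n,L) = M_e(n,L) + M_c(n,L),
\]
where $M_e(n,L)$ counts the elementary symmetric functions $e_{2^k}^I$ ($I\in T_n$, $0\leq k<\rho(I,L)$) and $M_c(n,L)$ counts the non-constant products $c_S^I$ with $\emptyset\neq S\subset\{0,\ldots,\rho(I,L)-1\}$. Some $c_S^I$ with $|S|=1$ coincide with $e_{2^k}^I$, so this splitting overcounts, which is harmless for an upper bound.

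\emph{First step.} The degrees $2^k$ appearing at node $I$ in the quadratic factorized convolution scheme are the same as in the factorized scheme of Proposition \ref{main4}. The quadratic identity only rewrites the products of the $e_{2^h}$ through the auxiliary polynomials $c_S$. Hence $M_e(n,L)=M(n,L)=\sum_{I\in T_n}\rho(I,L)$, and Lemma \ref{lemma3} gives $M_e(n,L)\leq 12n=\cO(n)$.

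\emph{Second step.} We bound $M_c(n,L)$. At a node $I$, every $c_S^I$ needed by the identity has $S$ equal to the support of the binary expansion of some $j$ or $2^k-j$ with $0\leq j\leq 2^k\leq\min(|I|,2^L)$. Terms whose index $j$ or $2^k-j$ exceeds the length of the corresponding child vanish by the convention $e_d=0$. In all cases $S\subset\{0,\ldots,\rho(I,L)-1\}$, where $I$ is the relevant child interval. The number of non-empty such subsets at each node is at most $2^{\rho(I,L)}-1$. Summing over $T_n$ gives
\[
M_c(n,L)\leq \sum_{I\in T_n}\bigl(2^{\rho(I,L)}-1\bigr)=C(n,L),
\]
and Lemma \ref{lemma4} yields $C(n,L)\leq 8n(L+1)$. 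Since $L\geq 1$, this is $\cO(nL)$.

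\emph{Combination.} Adding the two bounds gives $M'(n,L)\leq 12n+8n(L+1)=\cO(nL)$.

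The main obstacle is the bookkeeping in the second step. We must check that the subsets $S$ actually generated at a child interval never use indices $k\geq\rho(\text{child},L)$. A child has length about $|I|/2$, yet it may receive requests for degrees $2^k$ up to $2^L$ coming from its parent. This is where the convention $e_d^{(a,b)}=0$ for $d>b-a+1$ is used: any $S$ containing an index $k$ with $2^k>|\text{child}|$ gives a product that is $0$ and is not generated. Moreover, $j\leq 2^L$ forces all bits of $j$ to lie in positions $\leq L$, except for $j=2^L$ exactly, which is still covered. So every generated $S$ lies in $\{0,\ldots,\rho(\text{child},L)-1\}$, and the count per node is at most $2^{\rho(I,L)}-1$ as claimed.
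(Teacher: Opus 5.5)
Your proposal is correct and follows the paper's route. The count of the $e_{2^k}^I$ is bounded through Proposition \ref{main4} and Lemma \ref{lemma3}, and the non-constant $c_S^I$ are bounded by $C(n,L)=\sum_{I\in T_n}(2^{\rho(I,L)}-1)\leq 8n(L+1)$ from Lemma \ref{lemma4}. Your extra check makes explicit a step the paper only asserts: every subset $S$ requested at a child interval either lies in $\{0,\ldots,\rho(\text{child},L)-1\}$ or gives a vanishing product.
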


\begin{proof}
The number of elementary symmetric functions $e_{2^k}^I$ generated in the construction
has already been shown to be $\cO(n)$ in Proposition \ref{main4}.

We now consider the auxiliary non-constant polynomials $c_S^I$. In the recursive application
of the quadratic factorized convolution along the binary decomposition tree $T_n$, for each
interval $I\in T_n$, we have defined
\[
\rho(I,L) = \# \{k\geq 0 \mid 2^k \leq \min(|I|,2^L)\} = \min(\lfloor \log_2(|I|)\rfloor, L) + 1
\]

As a consequence, only elementary symmetric functions of degrees $2^k$ with $k < \rho(I,L)$
can appear in the Lucas factorization at the node $I$. Hence, each non-constant
polynomial $c_S^I$ is indexed by a non-empty subset
\[
S\subset \{0,\ldots,\rho(I,L)-1\}
\]

Using the estimate of Lemma \ref{lemma4} for the combinatorial number
\[
C(n,L) = \sum_{I\in T_n} (2^{\rho(I,L)} - 1)
\]
we obtain that the total number of non-constant polynomials $c_S^I$ is in $\cO(nL)$.
Combining both contributions yields $M'(n,L) = \cO(nL)$.
\end{proof}

For each weight $0 \leq t \leq n$, as in Theorem \ref{main3} and Theorem \ref{main5},
we encode the recursion based on the quadratic factorized convolution identities along
the balanced binary tree $T_n$ as a lifted Hamming ideal, which we call the
{\em Quadratic Factorized Convolution Hamming ideal of weight $t$}.

The number of auxiliary variables required to define such an ideal is $\cO(n L)$,
by Proposition \ref{main6}, and all its generators have degree at most 2.
Note that, although the Factorized Convolution Hamming ideal requires fewer
auxiliary variables, its generators have higher degree.

We will refer to the Convolution, Factorized Convolution, and Quadratic Factorized
Convolution Hamming ideals as {\em C-Hamming}, {\em FC-Hamming}, and
{\em QFC-Hamming} ideals, respectively. In our computational experiments,
we will compare the performance of these schemes on random binary linear codes
with cryptographic parameters.

\section{Syndrome Decoding Problem and Information Sets}

Let $0\leq k\leq n$ and consider the binary vector spaces $V = \FF_2^n$ and $W = \FF_2^{n-k}$.
A {\em binary linear code of dimension $k$ and length $n$} is by definition a $k$-dimensional
subspace $\cC\subset V$. Any such code is defined as $\cC = \Ker(L)$, where $L:V\to W$ is a linear
map of maximal rank. We call $L$ a {\em parity-check mapping} of the code $\cC$. Given a vector
$v\in V$, we define the {\em syndrome of $v$} as the image $s = L(v)\in W$.
Consider the polynomial algebra $R = \FF_2[x_1,\ldots,x_n]$, and denote by
$l_1,\ldots,l_{n-k}\in R$ the $n-k$ independent linear forms defining $L$.
Let $s = (s_1,\ldots,s_{n-k})\in W$ be a syndrome. We define the linear ideal
\[
L_s = \langle l_1 - s_1, \ldots, l_{n-k} - s_{n-k} \rangle\subset R 
\]
We call $L_s$ the {\em syndrome ideal} corresponding to $s$. Note that $V_{\FF_2}(L_s) = L^{-1}(s)$.

For any weight $0\leq t\leq n$, let $I_t + \cF$ be the Hamming ideal of weight $t$, so that
$V_{\FF_2}(I_t) = H_t$. The {\em syndrome decoding ideal} of syndrome $s$ and weight $t$
is defined as the sum
\[
J_{s,t} = L_s + I_t
\]
It follows that $H_{s,t} = V_{\FF_2}(J_{s,t}) = L^{-1}(s)\cap H_t$. We call $H_{s,t}$ the
{\em syndrome decoding variety} corresponding to $s$ and $t$.

The Syndrome Decoding Problem, in its exact and bounded-weight variants, is a fundamental computational
problem in coding theory and code-based cryptography. In particular, the security of the McEliece
cryptosystem and several of its variants relies on the presumed hardness of recovering a vector $v\in V$
of weight $\wt(v)\leq t$ from its syndrome $L(v) = s$. In this paper, we focus on the exact variant
of the Syndrome Decoding Problem, that is, $\wt(v) =  t$ since the exact and bounded-weight formulations
are polynomially equivalent.

Considering the system of polynomial equations corresponding to the vector equations $L(v) = s$
and $\wt(v) =  t$, one has that solving the Exact Syndrome Decoding Problem, briefly ESDP,
amounts to computing some element of the variety $H_{s,t}$.

By definition, the {\em minimum distance} of the code $\cC$ is
\[
d = \min\{\wt(u) \mid u \in \cC, u \neq 0\}
\]
It follows immediately from the definition of the minimum distance that two
distinct error vectors of weight at most
\[
t = \lfloor (d-1)/2 \rfloor
\]
cannot have the same syndrome. The integer $t$ is called the {\em error-correcting
capability} of the code $\cC$. In what follows, {\em we assume} that the parameter $t$
defining the syndrome variety $H_{s,t}$ is the error-correcting capability of $\cC$.
Therefore, for every syndrome $s\in W$,
\[
\# H_{s,t}\leq 1
\]

In the McEliece cryptosystem, encryption with respect to the code $\cC$ is performed by
mapping $u\in C \mapsto w = u + v\in V$, where $v\in H_{s,t}$ and $s = L(w)$.
An unauthorized decryption $u = w - v$ thus amounts to solving the ESDP instance
associated with syndrome $s$.

A fundamental concept in the complexity analysis of the Syndrome Decoding Problem
is the notion of Information Set. With respect to a fixed variable ordering, we
apply Gaussian elimination, or equivalently compute a \Gr\ basis, to the code equations
\[
l_1 = 0, \ldots, l_{n-k} = 0
\]
This yields a partition of the variable set $\{x_1,\ldots,x_n\}$ into a set of
$n-k$ pivot variables and a set of $k$ free variables. With respect to the chosen
variable ordering, an {\em Information Set} of the code $\cC$ is by definition the set
of free variables. Different variable orderings generally yield different Information
Sets. Given an Information Set of the code $\cC$, an {\em Evaluation Set} is any subset
of it.

To solve an ESDP instance, namely to compute $H_{s,t} = V_{\FF_2}(J_{s,t})$, one may
compute a \Gr\ basis $G$ of the ideal $J_{s,t} + \cF$
where $\cF = \langle x_1^2 - x_1, \ldots, x_n^2 - x_n \rangle$ is the field equations
ideal of $R$. Indeed, since we assume that $\# H_{s,t}\leq 1$, for any monomial
ordering of $R$ we have
\[
G = 
\left\{
\begin{array}{cl}
\{x_1 - v_1,\ldots,x_n - v_n\} & \mbox{if}\ H_{s,t} = \{(v_1,\ldots,v_n)\}, \\
\{1\} & \mbox{if}\ H_{s,t} = \emptyset
\end{array}
\right.
\]

To compute a \Gr\ basis of the ideal $J_{s,t} + \cF$ is, in general, a computationally demanding
task for parameter sets arising in practical code-based cryptography. Since
\[
J_{s,t} = L_s + I_t
\]
one problem is that the Hamming ideal $I_t + \cF$ is generated by high-degree polynomials, which may
significantly increase the \Gr\ basis solving degree. A possible way to mitigate this issue
is to replace $I_t + \cF$ with a suitable lifted Hamming ideal, such as C-Hamming, FC-Hamming,
and QFC-Hamming ideals, introduced in the previous sections.

A second challenge is the potentially large number of variables involved.
Fix an Evaluation Set and, for ease of notation, assume that it is $\{x_1,\ldots,x_r\}$.
Let $u = (u_1,\ldots,u_r)\in\FF_2^r$ with $\wt(u)\leq t$, and define an
{\em evaluation ideal}
\[
E_u = \langle x_1 - u_1,\ldots,x_r - u_r \rangle
\]
By definition of Evaluation Set, the partial assignment encoded by $E_u$ is consistent
with the syndrome constraints encoded by $L_s$. Equivalently,
\[
V_{\FF_2}(E_u + L_s) \neq \emptyset
\]
or, in ideal-theoretic terms, $1\notin E_u + L_s$.

If $H_{s,t}$ cannot be computed in practice, we can consider the identity
\[
H_{s,t} = \bigcup_{\wt(u)\leq t} V_{\FF_2}(J_{s,t} + E_u)
\]
Hence, instead of solving a single polynomial system in $n$ variables, one may solve a family
of systems, indexed by the vectors $u\in\FF_2^r$ with $\wt(u)\leq t$, each involving only
$n-r$ variables. We refer to this approach as a {\em hybrid strategy} for the Exact Syndrome
Decoding Problem.

Fix a vector $u = (u_1,\ldots,u_r)\in\FF_2^r$ such that $\wt(u)\leq t$.
We now study how assigning the variables $x_1,\ldots,x_r$ of the Evaluation Set
to the values $u_1,\ldots,u_r$, and thereby eliminating them from the system, affects
the computation of $V_{\FF_2}(J_{s,t} + E_u)$.

Consider the injective polynomial map $\psi_u:\FF_2^{n-r} \to \FF_2^n$ such that
\[
(v_1,\ldots,v_{n-r})\mapsto (u_1,\ldots,u_r,v_1,\ldots,v_{n-r})
\]
Put $t' = t - \wt(u)$ and denote by $H'_{t'}\subset \FF_2^{n-r}$ the Hamming variety
of weight $t'$. It is immediate that
\[
\psi_u(H'_{t'}) = H_t \cap \psi_u(\FF_2^{n-r})
\]
where $\psi_u(\FF_2^{n-r}) = V_{\FF_2}(E_u)$.

The algebraic counterpart of this result is obtained as follows.

\begin{proposition}
Denote $R' = \FF_2[x_{r+1},\ldots,x_n]$ and let $\cF'$ be the field equations
ideal of the subalgebra $R'\subset R$. Consider the surjective algebra homomorphism
$\varphi_u:R\to R'$ such that
\[
x_i\mapsto
\left\{
\begin{array}{cl}
u_i & \mbox{if}\ i\leq r, \\
x_i & \mbox{otherwise} \\
\end{array}
\right.
\]
Note that $E_u = \Ker\varphi_u$ and $\psi_u$ is the polynomial map corresponding to
the algebra homomorphism $\varphi_u$, namely
\[
\varphi_u(f)(v) = f(\psi_u(v))
\]
for all $f\in R$ and $v\in\FF_2^{n-r}$. We have that
\[
I'_{t'} + \cF' = \varphi_u(I_t + E_u + \cF)
\]
where $I'_{t'} + \cF'\subset R'$ is the Hamming ideal of weight $t'$.
\end{proposition}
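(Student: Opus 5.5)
Both sides of the claimed equality are ideals of $R'$ containing the field equations $\cF'$, so I would show they are radical and have the same zero set in $\FF_2^{n-r}$. The statement then follows from the Nullstellensatz over finite fields recalled in Section 2: an ideal of $R'$ containing $\cF'$ is radical and coincides with the ideal of all polynomials vanishing on its $\FF_2$-points.

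The first step is to check that $\varphi_u(I_t + E_u + \cF)$ is indeed an ideal of $R'$ containing $\cF'$. Since $\varphi_u$ is surjective, the image of an ideal is an ideal. Since $\varphi_u(x_i^2-x_i)=x_i^2-x_i$ for $i>r$, the image contains $\cF'$. Moreover, $E_u=\Ker\varphi_u\subseteq I_t+E_u+\cF$, so the usual correspondence for surjective homomorphisms applies. Explicitly, $f\in R'$ lies in $\varphi_u(I_t+E_u+\cF)$ if and only if $f\circ\iota\in I_t+E_u+\cF$, where $\iota:R'\to R$ is the inclusion and $\varphi_u\circ\iota=\mathrm{id}$. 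This step is routine.

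Next I compute the $\FF_2$-zero set of $K=\varphi_u(I_t+E_u+\cF)$ using the identity $\varphi_u(f)(v)=f(\psi_u(v))$. A point $v\in\FF_2^{n-r}$ is a zero of $K$ if and only if $\psi_u(v)$ is a zero of $I_t+\cF$, since every generator of $E_u$ vanishes automatically at $\psi_u(v)$. By Theorem \ref{main1} and the discussion after it, this holds if and only if $\psi_u(v)\in H_t$. By the remark preceding the statement, this is equivalent to $v\in H'_{t'}$. If $t'<0$, both sides are empty; in that case I would interpret $I'_{t'}+\cF'$ as the unit ideal, or assume $\wt(u)\le t$ as in the text.

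On the other side, $V_{\FF_2}(I'_{t'})=H'_{t'}$ by the same theorem applied in $n-r$ variables. The definition of $I'_{t'}$ with $l'=\lfloor\log_2(n-r)\rfloor$ still cuts out exactly $H'_{t'}$, because $t'\le n-r$ and the digits of $t'$ up to $l'$ determine it. Thus both ideals contain $\cF'$ and have the same $\FF_2$-variety, and hence they coincide. The main point requiring care is the radicality of $K$. I would obtain it by noting that $K$ contains $\cF'$ and invoking the cited fact that $J+\cF'$ is radical and equals the vanishing ideal of $V_{\FF_2}(J)$. One can also see it directly: any $f$ vanishing on $H'_{t'}$ pulls back to $f\circ\iota$, which vanishes on $H_t\cap V_{\FF_2}(E_u)$. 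Therefore $f\circ\iota\in I_t+E_u+\cF$ by the Nullstellensatz in $R$, and so $f=\varphi_u(f\circ\iota)\in K$.
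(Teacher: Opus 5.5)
Your proposal is correct and follows the same route as the paper's proof. Both combine the variety identity $\psi_u(H'_{t'}) = H_t\cap\psi_u(\FF_2^{n-r})$ with the fact that ideals containing the field equations are radical and determined by their $\FF_2$-points. You also make explicit what the paper leaves implicit: that $\varphi_u(I_t+E_u+\cF)$ is an ideal of $R'$ containing $\cF'$, and the pullback along the inclusion $R'\subset R$ that gives the reverse containment.
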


\begin{proof}
It is sufficient to consider the variety identity $\psi_u(H'_{t'}) = H_t \cap \psi_u(\FF_2^{n-r})$,
together with the fact that $I'_{t'} + \cF', I_t + \cF$ and $E_u + \cF$ are radical ideals
corresponding to varieties $H'_{t'}, H_t$ and $\psi_u(\FF_2^{n-r})$, respectively.
\end{proof}

By defining $L'_{s,u} = \varphi_u(L_s)$ and
\[
J'_{s,t',u} = L'_{s,u} + I'_{t'}, H'_{s,t',u} = V_{\FF_2}(J'_{s,t',u})
\]
we obtain
\[
J'_{s,t',u} + \cF' = \varphi_u(J_{s,t} + E_u + \cF)
\]
or equivalently
\[
\psi_u(H'_{s,t'}) = H_{s,t} \cap \psi_u(\FF_2^{n-r})
\]
where $\psi_u(\FF_2^{n-r}) = V_{\FF_2}(E_u)$.

This identity reduces the Exact Syndrome Decoding Problem, namely the computation of the
syndrome variety $H_{s,t}$, to the computation of the varieties $H'_{s,t',u}$ corresponding
to each choice of $u \in \FF_2^r$ with $\wt(u)\leq t$ and $t' = t - \wt(w)$.
We refer to $J'_{s,t',u}$ and $H'_{s,t',u}$ as the {\em reduced syndrome decoding ideal
and reduced syndrome decoding variety}, respectively.

\section{An ISD-like strategy}

A more precise notation for the evaluation ideal should indicate its dependence
on the choice of an Evaluation Set $S = \{x_{i_1},\ldots,x_{i_r}\}\subset
X = \{x_1,\ldots,x_n\}$. Accordingly, we define
\[
E_{u,S} = \langle x_{i_1} - u_1,\ldots,x_{i_r} - u_r\rangle.
\]
Similarly, we write $J'_{s,t',u,S}$ and $H'_{s,t',u,S}$ to make their dependence
on $S$ explicit.

We have shown that to compute $H_{s,t}$ is equivalent to compute $H'_{s,t',u,S}$
for a fixed Evaluation Set $S$ and for all vectors $u\in\FF_2^r, \wt(u)\leq t$.
We have called this approach an {\em hybrid strategy}.

Since, in the Syndrome Decoding Problem, the weight $t$ denotes the error-correcting
capability of the code $\cC$, this parameter is typically small compared to the code
length $n$. Therefore, it is generally more efficient to fix a weight $\bt \leq \min(r,t)$
and check whether $H'_{s,t',u,S}\neq \emptyset$ for some vector $u\in\FF_2^r$ with
$\wt(u) = \bt$. If this is not the case, the Evaluation Set $S$ is updated. We call
this approach an {\em ISD-like strategy}, inspired by Information Set Decoding methods.

We remark that, in ISD methods, an Evaluation Set typically contains an Information Set
($r\geq k$), thereby reducing the decoding step to linear algebra. In our approach,
instead, an Evaluation Set is contained in an Information Set ($r\leq k$), since
we allow the solution of non-linear polynomial systems, for instance via \Gr\ basis
techniques.

An ISD-like strategy is described by the following algorithm. We assume that the parameters
$s,t$ are such that $\# H_{s,t} = 1$.

\begin{algorithm}[H]
\caption{\GBDecode}
\begin{algorithmic}[1]
\Require Ideal $J_{s,t}\subset R = \FF_2[X]$, parameters $r\leq k$ and $\bt \leq \min(r,t)$
\Ensure  \hspace{1pt} Unique vector $v\in\FF_2^n$ such that $H_{s,t} = \{v\}$
\State Set $t' = t - \bt$
\While{true}
  \State Sample uniformly an Evaluation Set $S = \{x_{i_1},\ldots,x_{i_r}\}\subset X$
  \State Let $S^c = \{x_{j_1},\ldots,x_{j_{n-r}}\}$ be the complement of $S$
  \For{\textbf{each} $u \in \FF_2^r$ with $\wt(u) = \bt$}
    \State Construct a generating set for the ideal $J'_{s,t',u,S}\subset R' = \FF_2[S^c]$ 
    \State Compute a \Gr\ basis $G$ of the ideal $J'_{s,t',u,S} + \cF'$
    \If{$G = \{x_{j_1} - v'_1,\ldots,x_{j_{n-r}} - v'_{n-r}\}$}
    \State Reconstruct $v = (v_1,\ldots,v_n)$ by setting
    \State \hspace{1em} $v_{i_\alpha} = u_\alpha$ ($1\leq\alpha\leq r$),
                        $v_{j_\beta} = v'_\beta$ ($1\leq\beta\leq n-r$)
    \State \Return $v$
    \EndIf
  \EndFor
\EndWhile
\end{algorithmic}
\end{algorithm}

In the above algorithm, the Evaluation Set $S$ is obtained by first sampling
an Information Set $\{x_{i_1},\ldots,x_{i_k}\}$, and then setting $S = \{x_{i_1},\ldots,x_{i_r}\}$
($r \leq k$). This is done by extracting the free variables of the linear system
defining the code $\cC$, namely
\[
l_1 = 0, \ldots, l_{n-k} = 0
\]
after performing Gaussian elimination on a randomly permuted ordering of the variable set
$X = \{x_1,\ldots,x_n\}$.

We will show that, under a unit-cost model for \Gr\ basis computations, the ISD-like strategy
with $\bt = 0$ requires no more such computations than the hybrid strategy.
Recall that the latter fixes the Evaluation Set $S$ and then performs an exhaustive search over
all vectors $u\in\FF_2^r$ satisfying $\wt(u)\leq t$.

\begin{proposition}
A single random choice of the Evaluation Set $S$ in algorithm \GBDecode,
leads to successful recovery of $v\in H_{s,t}$ with probability
\[
P(\bt) = \frac{\binom{t}{\bt}\binom{n-t}{r-\bt}}{\binom{n}{r}}
\]
\end{proposition}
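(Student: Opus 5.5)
The plan is to recast the success event of one iteration of the while loop as a purely combinatorial condition on the random set $S$, and then count.

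Let $v$ be the unique element of $H_{s,t}$, and let $\mathrm{supp}(v)=\{i \mid v_i = 1\}$, a set of size $t$.

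\textbf{Step 1: characterize success.} Fix a sampled Evaluation Set $S=\{x_{i_1},\ldots,x_{i_r}\}$. The inner loop runs over all $u\in\FF_2^r$ with $\wt(u)=\bt$. I claim it returns $v$ if and only if exactly $\bt$ indices of $S$ lie in $\mathrm{supp}(v)$.

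For the ``if'' direction, suppose the restriction $u^\ast=(v_{i_1},\ldots,v_{i_r})$ has weight $\bt$. Then $u^\ast$ is among the vectors enumerated. By the identity $\psi_u(H'_{s,t'})=H_{s,t}\cap\psi_u(\FF_2^{n-r})$ established after the preceding Proposition, the reduced variety $H'_{s,t',u^\ast,S}$ is the singleton given by the remaining coordinates of $v$. Since $J'_{s,t',u^\ast,S}+\cF'$ is radical with a single $\FF_2$-point, its \Gr\ basis has the form $\{x_{j_\beta}-v'_\beta\}$, so the algorithm reconstructs $v$.

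For the ``only if'' direction, take any $u$ with $\wt(u)=\bt$ and $u\neq u^\ast$. Any point of $H'_{s,t',u,S}$ would lift to an element of $H_{s,t}$ different from $v$, contradicting $\#H_{s,t}=1$. Hence $G=\{1\}$ for every such $u$. So if $\wt(u^\ast)\neq\bt$, nothing is returned for this $S$.

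\textbf{Step 2: count.} Under the modelling assumption that $S$ is a uniformly random $r$-subset of $X$ (as stated in the algorithm), the probability is the number of $r$-subsets meeting $\mathrm{supp}(v)$ in exactly $\bt$ elements, divided by $\binom{n}{r}$. To build such a subset, choose $\bt$ of the $t$ support positions and $r-\bt$ of the $n-t$ zero positions. This gives $\binom{t}{\bt}\binom{n-t}{r-\bt}$ subsets, which is the hypergeometric formula $P(\bt)$.

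\textbf{Main obstacle.} The delicate point is the modelling assumption. In practice $S$ is obtained as a prefix of an Information Set computed from a random permutation of the variables. I would argue that the resulting $S$ is uniform, or at least treat it as uniform, as is customary in ISD analyses. Everything else follows directly from uniqueness of the solution and the reduction identity.
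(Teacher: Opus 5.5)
Your proposal is correct and follows the paper's proof: one iteration succeeds exactly when $\wt(v_{i_1},\ldots,v_{i_r})=\bt$, and counting the $r$-subsets that meet the support of $v$ in $\bt$ positions gives $P(\bt)$, with your Step 1 spelling out (via uniqueness of $v$ and the reduction identity) the success criterion that the paper states directly. On your ``main obstacle'', the paper simply assumes $S$ uniform over all $r$-subsets, as in the sampling step of \GBDecode, so do not try to prove that prefixes of information sets are uniform, since in general they are not (not every $r$-subset need lie in an information set). If you want a justification that does not depend on how $S$ is drawn, take the randomness over a uniformly random weight-$t$ error $v$ independent of $S$; this makes the number of support positions of $v$ in $S$ exactly hypergeometric.
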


\begin{proof}
Denote $S = \{x_{i_1},\ldots,x_{i_r}\}$ and $u = (v_{i_1},\ldots,v_{i_r})$.
A single iteration of the algorithm succeeds when the uniformly random set $S$
is such that $\wt(u) = \bt$.

Since $\wt(v) = t$, there are exactly $t$ coordinates of $v$ equal to $1$
and $n - t$ coordinates equal to $0$. Therefore, a successful Evaluation Set $S$
is obtained by choosing $\bt$ variables among the $t$ positions where $v$ has value 1
and $r - \bt$ variables among the $n - t$ remaining positions.
The number of successful choices is thus
\[
\binom{t}{\bt}\binom{n - t}{r - \bt}
\]
Since the total number of subsets of $\{x_1,\ldots,x_n\}$ of cardinality $r$ is
$\binom{n}{r}$, the claimed probability follows.
\end{proof}


Given the success probability $P(\bt)$, the expected number of \Gr\ basis
computations performed by the \GBDecode\ algorithm, that is, the average cost
of this algorithm under the assumption that each \Gr\ basis has unit cost,
is given by
\[
C(\bt) \approx \binom{r}{\bt} \cdot \frac{1}{P(\bt)} =
\frac{\binom{n}{r} \binom{r}{\bt}}{\binom{t}{\bt} \binom{n-t}{r-\bt}}
\]

\begin{proposition}
\label{isd-like-optimal}
The expected cost $C(\bt)$ of algorithm \GBDecode\ is minimized at $\bt = 0$.
\end{proposition}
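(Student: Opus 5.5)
Since every factor of $C(\bt)$ is an explicit binomial coefficient, the plan is to show that $C(\bt)$ is strictly increasing in $\bt$ on its admissible range $0\leq \bt\leq \min(r,t)$, restricted to those $\bt$ with $r-\bt\leq n-t$ so that $P(\bt)>0$. It then suffices to compare consecutive terms through the ratio $C(\bt+1)/C(\bt)$ and prove that this ratio is at least $1$ (in fact $>1$ in the nontrivial regime). This forces the minimum to be attained at $\bt=0$.

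\textbf{Step 1: the ratio.} Using $\binom{a}{b+1}/\binom{a}{b}=(a-b)/(b+1)$ on each binomial, we get
\[
\frac{\binom{r}{\bt+1}}{\binom{r}{\bt}}=\frac{r-\bt}{\bt+1},\qquad
\frac{\binom{t}{\bt+1}}{\binom{t}{\bt}}=\frac{t-\bt}{\bt+1},\qquad
\frac{\binom{n-t}{r-\bt-1}}{\binom{n-t}{r-\bt}}=\frac{r-\bt}{n-t-r+\bt+1}.
\]
The common factor $\binom{n}{r}$ cancels, and the factors $(r-\bt)/(\bt+1)$ cancel as well, leaving
\[
\frac{C(\bt+1)}{C(\bt)}=\frac{n-t-r+\bt+1}{t-\bt}.
\]

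\textbf{Step 2: the inequality.} We need $n-t-r+\bt+1\geq t-\bt$, that is, $n+1\geq r+2t-2\bt$. Since $\bt\geq 0$, it suffices to have $n\geq r+2t$ (up to the $+1$).

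We have $r\leq k$, and $t=\lfloor (d-1)/2\rfloor$ with $d$ the minimum distance. The Singleton bound gives $d\leq n-k+1$, hence $2t\leq d-1\leq n-k$, and therefore $r+2t\leq k+(n-k)=n$. Thus the ratio is at least $(n-r-2t+2\bt+1)/1$-type quantity, namely $\geq 1$, and strictly greater than $1$ since $n+1>n\geq r+2t-2\bt$. Hence $C$ is strictly increasing and $C(0)$ is minimal.

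\textbf{Main obstacle.} Step 2 is the delicate point: the claim is false without some constraint tying $r$ and $t$ to $n$. The work therefore lies in invoking the standing assumption that $t$ is the error-correcting capability together with $r\leq k$, and citing the Singleton bound (which the paper has not stated, so it must be recalled). We must also handle the edge cases where some binomial vanishes, i.e. $r-\bt>n-t$, where $C(\bt)$ is infinite. These are dismissed by noting that $r\leq k\leq n-2t\leq n-t$, so all terms are finite and positive for $0\leq\bt\leq\min(r,t)$.
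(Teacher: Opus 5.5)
Your proposal is correct and follows essentially the same route as the paper. Both use the consecutive-term ratio $(n-t-r+\bt+1)/(t-\bt)$ (the paper first drops the constant $\binom{n}{r}$ and works with $F(\bt)$) together with $2t\le d-1\le n-k$ from the Singleton bound and $r\le k$ to show the ratio exceeds $1$. Your check that $r\le n-2t$ keeps every binomial positive is a small addition the paper leaves implicit, but you should replace the garbled phrase ``at least $(n-r-2t+2\bt+1)/1$-type quantity'' with the clean identity $\frac{n-t-r+\bt+1}{t-\bt}=1+\frac{n-r-2t+2\bt+1}{t-\bt}>1$.
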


\begin{proof}
It suffices to minimize
\[
F(\bt) = \frac{\binom{r}{\bt}}{\binom{t}{\bt}\binom{n-t}{r-\bt}}
\]
since $\binom{n}{r}$ does not depend on $\bt$. For integer arguments, the ratio
of consecutive terms is given by
\[
\begin{aligned}
R(\bt) & = \frac{F(\bt+1)}{F(\bt)}
= \frac{\binom{r}{\bt+1}}{\binom{r}{\bt}}
   \cdot \frac{\binom{t}{\bt}}{\binom{t}{\bt+1}}
   \cdot \frac{\binom{n-t}{r-\bt}}{\binom{n-t}{r-\bt-1}} \\
& = \frac{r-\bt}{\bt+1}
   \cdot \frac{\bt+1}{t-\bt}
   \cdot \frac{n-t-r+\bt+1}{r-\bt}
 = \frac{n-t-r+\bt+1}{t-\bt}
\end{aligned}
\]
We have $F(\bt+1) > F(\bt)$ if and only if $R(\bt) > 1$, that is
\[
n - t - r + \bt + 1 > t - \bt
\ \Longleftrightarrow\
\bt > t - \frac{n-r+1}{2}
\]

Since $t$ is, by hypothesis, the error-correction capability guaranteeing uniqueness
of the solution to the Syndrome Decoding Problem, the code $\cC$ has minimum distance
$d\geq 2 t + 1$. By the Singleton bound, we have $d\leq n - k + 1$, and hence
$2 t + 1 \leq n - k + 1$, that is
\[
t \leq \frac{n-k}{2}
\]
Since $r\leq k$, we obtain
\[
t - \frac{n-r+1}{2} \leq t - \frac{n-k+1}{2}
\]
Moreover, using $t \leq \frac{n-k}{2}$ we get
\[
t - \frac{n-k+1}{2}\leq \frac{n-k}{2} - \frac{n-k+1}{2} = -\frac12 < 0
\]
Thus the inequality $\bt > t - \frac{n-r+1}{2}$ holds for every $\bt\geq 0$, so
$F(\bt+1) > F(\bt)$ for all $0\leq \bt < \min(r,t)$. Hence $F(\bt)$ is strictly increasing
and its minimum is attained at $\bt = 0$.
\end{proof}

We finally compare the complexity of the ISD-like strategy of the \GBDecode\ algorithm,
run with the optimal parameter $\bt = 0$ established in Proposition \ref{isd-like-optimal},
with the complexity of the hybrid strategy, given by
\[
C' = \sum_{0\leq \bt\leq \min(r,t)} \binom{r}{\bt}
\]

\begin{proposition}
It holds that
\[
C(0) = \frac{\binom{n}{r}}{\binom{n-t}{r}} \leq C'
\]
with strict inequality whenever $r,t\geq 1$.
\end{proposition}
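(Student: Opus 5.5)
The plan is to rewrite $C(0)$ so that it can be compared term by term with $C'$, using two classical binomial identities and the weight bound $t\leq (n-k)/2$ already derived in the proof of Proposition \ref{isd-like-optimal}.

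\emph{Symmetrize $C(0)$.} First use the trinomial identity
\[
\binom{n}{r}\binom{n-r}{t} = \binom{n}{t}\binom{n-t}{r}.
\]
Both sides count the ways of choosing disjoint subsets of sizes $r$ and $t$ in $\{1,\ldots,n\}$. It gives
\[
C(0) = \frac{\binom{n}{r}}{\binom{n-t}{r}} = \frac{\binom{n}{t}}{\binom{n-r}{t}}.
\]
The denominators are nonzero because $t+r\leq (n-k)/2 + k\leq n$.

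\emph{Expand the numerator.} Next apply the Vandermonde convolution, splitting $\{1,\ldots,n\}$ into the $r$ evaluated positions and the $n-r$ remaining ones:
\[
\binom{n}{t} = \sum_{0\leq j\leq \min(r,t)} \binom{r}{j}\binom{n-r}{t-j}.
\]
Hence
\[
C(0) = \sum_{0\leq j\leq \min(r,t)} \binom{r}{j}\,\frac{\binom{n-r}{t-j}}{\binom{n-r}{t}}.
\]
It therefore suffices to show $\binom{n-r}{t-j}\leq \binom{n-r}{t}$ for every $0\leq j\leq\min(r,t)$.

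\emph{Termwise comparison (the main step).} This is where the hypotheses enter. As in the proof of Proposition \ref{isd-like-optimal}, the Singleton bound gives $t\leq (n-k)/2$, and since $r\leq k$ we get $2t\leq n-r$. Put $m=n-r$. The sequence $\binom{m}{i}$ is nondecreasing for $0\leq i\leq \lceil m/2\rceil$, because $\binom{m}{i}/\binom{m}{i-1} = (m-i+1)/i\geq 1$ exactly when $2i\leq m+1$. All indices $t-j\leq t\leq m/2$ lie in this range, so every ratio above is at most $1$, and $C(0)\leq C'$ follows.

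\emph{Strictness.} If $r,t\geq 1$, the term $j=1$ is present. Its ratio is $\binom{m}{t-1}/\binom{m}{t} = t/(m-t+1)$, which is strictly less than $1$ since $2t\leq m<m+1$. The coefficient $\binom{r}{1}=r$ is positive, so the inequality is strict.

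The only delicate point is securing $2t\leq n-r$, which guarantees that we stay on the increasing side of the binomial row. This is precisely what the error-correcting-capability assumption and the Singleton bound provide.
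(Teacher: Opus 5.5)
Your proof is correct and follows the paper's argument essentially step for step: you rewrite $C(0)=\binom{n}{t}/\binom{n-r}{t}$, expand $\binom{n}{t}$ by Vandermonde's identity, bound each $\binom{n-r}{t-j}\le\binom{n-r}{t}$ using $2t\le n-r$ (from the Singleton bound and $r\le k$), and get strictness from the $j=1$ term. The only differences are cosmetic: you justify the symmetrization combinatorially rather than via factorials, and you check explicitly that the denominators are nonzero.
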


\begin{proof}
Since
\[
\frac{\binom{n}{r}}{\binom{n-t}{r}} = \frac{n! (n-t-r)!}{(n-r)! (n-t)!} = \frac{\binom{n}{t}}{\binom{n-r}{t}}
\]
we may rewrite
\[
C(0) = \frac{\binom{n}{t}}{\binom{n-r}{t}}
\]
Consider now the Vandermonde's identity,
\[
\binom{n}{t} = \sum_{0\leq \bt\leq t} \binom{r}{\bt} \binom{n - r}{t - \bt}
\]
where $\binom{r}{\bt} = 0$ for $\bt > r$, so the sum effectively ranges over
$0\leq \bt\leq \min(r,t)$.

As in the proof of Proposition \ref{isd-like-optimal}, the code $\cC$ has minimum
distance $d\geq 2 t + 1$. Moreover, by the Singleton bound, we have $d\leq n - k + 1$ and
therefore
\[
t\leq \frac{n-k}{2}\leq \frac{n-r}{2}
\]
since $r\leq k$. It follows that $t$ lies in the increasing region of the unimodal sequence
$i\mapsto\binom{n-r}{i}$. Consequently, for every $0\leq \bt \leq t$, we have
$t - \bt\leq t\leq \frac{n-r}{2}$, and hence
\[
\binom{n-r}{t - \bt}\leq \binom{n-r}{t}
\]

Substituting this estimate into Vandermonde's identity yields
\[
\binom{n}{t} = \sum_{0\leq \bt\leq \min(r,t)} \binom{r}{\bt}\binom{n - r}{t - \bt} \leq
\binom{n-r}{t} \sum_{0\leq \bt\leq \min(r,t)} \binom{r}{\bt} = \binom{n-r}{t} C'
\]
Hence, dividing by $\binom{n-r}{t} > 0$, we obtain $C(0)\leq C'$.

Finally, assume $r,t\geq 1$. Since $t - 1 < t\leq (n-r)/2$, we have
\[
\binom{n-r}{t-1} < \binom{n-r}{t}
\]
while $\binom{r}{1} = r > 0$. Therefore the inequality in the Vandermonde sum is strict,
which implies $C(0) < C'$.
\end{proof}

Note that the classical Information Set Decoding methods (see, for instance,
\cite{EW}) essentially correspond to choosing $r = k$. In particular, also setting
$\bt = 0$ in algorithm \GBDecode\ reduces to the Prange algorithm \cite{Pra}, which
pioneered the ISD paradigm.

The above results suggest that a Prange-like strategy is a natural candidate
for the optimization of the \GBDecode\ algorithm. This conclusion, however,
relies on the simplifying assumption that all \Gr\ basis computations
have the same cost.

In practice, one should investigate how this cost varies with
the choice of the parameter $r\leq k$ and the residual weight
$t' = t - \bt$. A more refined analysis must therefore account for the
actual cost of the individual \Gr\ basis computations.

In the following sections, we start addressing this issue by analyzing the
average number of \Gr\ basis calls required by the \GBDecode\ algorithm when
implemented using the \MultiSolve\ procedure \cite{LSPTV,LST}. In fact, the number
of such calls provides a first approximation of the computational complexity
of the algorithm. We then experimentally investigate the effect of tuning
\GBDecode\ for a random binary linear code with the NIST Security
Category 1 parameter set of the Classic McEliece cryptosystem.

\section{MultiSolve Algorithm}

In this section, we briefly review the \MultiSolve\ algorithm, a general method
for solving polynomial systems having at most one solution over a finite field.
The algorithm combines \Gr\ basis computations up to a prescribed solving degree
with a recursive branching strategy known as the ``multistep strategy''. Although
the algorithm is defined over arbitrary finite fields, we present it here only
over the binary field $\FF_2$, consistently with the framework adopted throughout
this paper. For a complete description of the algorithm, we refer the reader
to \cite{LST}.

Let $R = \FF_2[x_1,\ldots,x_n]$ and let $\cF = \langle x_1^2 - x_1,\ldots,x_n^2 - x_n \rangle$
be the field equations ideal. Consider $J = \langle f_1,\ldots,f_m \rangle\subset R$ an ideal
such that $\# V_{\FF_2}(J)\leq 1$.

Since $J + \cF$ is a radical ideal, a \Gr\ basis $G$ of the ideal $J + \cF$, with respect
to any monomial order of $R$, is either $G = \{1\}$ if $V_{\FF_2}(J) = \emptyset$, or
\[
G = \{x_1 - v_1,\ldots,x_n - v_n\}
\]
where $v = (v_1,\ldots,v_n)\in\FF_2^n$ such that $V_{\FF_2}(J) = \{v\}$.
Hence, in both cases, all polynomials in the \Gr\ basis have degree at most one.

Let $0\leq \ell \leq n$ and let $u = (u_1,\ldots,u_\ell)\in\FF_2^\ell$. We define
the evaluation ideal
\[
E_u = \langle x_1 - u_1, \ldots, x_\ell - u_\ell \rangle
\]
and the corresponding ideal $J_u = J + E_u$. By convention, when $\ell = 0$,
we set $E_{()} = 0$ and $J_{()} = J$. If $V_{\FF_2}(J) = \{v\}$, then
\[
V_{\FF_2}(J_u) =
\left\{
\begin{array}{cl}
V_{\FF_2}(J) & \mbox{if}\ u = (v_1,\ldots,v_\ell), \\
\emptyset & \mbox{otherwise}
\end{array}
\right.
\]
Therefore, we have
\[
V_{\FF_2}(J_u) = V_{\FF_2}(J_{(u,0)}) \cup V_{\FF_2}(J_{(u,1)})
\]
which motivates a binary divide-and-conquer strategy along the tree of partial assignments.

Given a generating set $H$ of an ideal and an integer $d\geq0$, we denote by
$\GB(H,d)$ the truncated \Gr\ basis computation with degree bound $d$, namely
the computation obtained by restricting the Macaulay matrices to degree
at most $d$. For a sufficiently large value of $d$, this procedure computes a complete
\Gr\ basis. The smallest such value is called a {\em solving degree} of $H$.

A first subroutine of algorithm \MultiSolve\ is the \GBSafe\ procedure, which performs
a truncated \Gr\ basis computation up to a prescribed degree $d$, limited by a timeout $\tau$.
It returns the flag $\tame$ if a complete \Gr\ basis is obtained, and $\wild$ otherwise.

\begin{algorithm}[H]
\caption{\GBSafe}
\begin{algorithmic}[1]
\Require Generating set $H$ of $J + \cF$; integer $d$; timeout $\tau > 0$
\Ensure  \hspace{1pt} $(\wild,\emptyset)$ or $(\tame,G)$ with $G = \GB(H,d)$
and $\maxdeg(G)\leq 1$
\State Compute $G\gets\GB(H,d)$ within timeout $\tau$
\If{the computation terminates within $\tau$ and $\maxdeg(G)\leq 1$}
  \State \Return $\tame,G$
\EndIf
\State \Return $\wild,\emptyset$
\end{algorithmic}
\end{algorithm}

The $\MultiSolve$ algorithm is a depth-first recursive procedure. Given a partial
assignment $u\in\FF_2^\ell$ of the first $\ell$ variables, it uses the predictive
function $\Oracle$ to decide whether to invoke $\GBSafe$. If a \Gr\ basis of $J_u + \cF$
cannot be computed, the algorithm recursively explores the two possible binary extensions
of the vector $u$. The objects $H$, $\Oracle$, $d$ and $\tau$ are treated as global
parameters.

\begin{algorithm}[H]
\caption{$\MultiSolve(u)$}
\label{alg:multisolve}
\begin{algorithmic}[1]
\Require A vector $u\in\FF_2^\ell$ ($0\leq \ell \leq n$)
\Ensure \hspace{1pt} A \Gr\ basis of $J_u + \cF$
\State $G\gets \{x_1 - u_1,\ldots,x_\ell - u_\ell\} \cup H$
\If{$\Oracle(G, \ell) = \tame$}
    \State $\status,G \gets \GBSafe(G,d,\tau)$
    \If{$\status = \tame$}
        \State \Return $G$
    \EndIf
\EndIf
\For{$b\in\FF_2$}
    \State $G' \gets \MultiSolve((u,b))$
    \If{$\maxdeg(G') = 1$}
        \State \Return $G'$
    \EndIf
\EndFor
\State \Return $\{1\}$
\end{algorithmic}
\end{algorithm}

Termination is guaranteed since, in the worst case, after $n$ recursive
steps all variables are assigned. In practice, however, the recursion is
typically terminated much earlier, as $\GBSafe$ may already compute a \Gr\
basis of $J_u + \cF$ at an intermediate node of the search tree.
The algorithm generalizes both exhaustive search over $\FF_2^n$, obtained
for an oracle that always returns \wild, and the classical hybrid strategy,
recovered by the oracle
\[
\OracleH_B(G,\ell) =
\left\{
\begin{array}{cl}
\wild & \mbox{if}\ \ell < B, \\
\tame & \mbox{otherwise}
\end{array}
\right.
\]

We briefly recall the complexity analysis of the \MultiSolve\ algorithm
given in \cite{LST}, in the simplified setting of a binary finite field.

The recursive calls of \MultiSolve\ form a full binary tree $T$, where
each internal node has exactly two children corresponding to the two recursive
calls generated by a $\wild$ case. This tree represents the execution in the
case where no solution exists, since all branches must be explored. If a
solution exists, the algorithm stops as soon as a successful branch is found,
and only a portion of the tree is visited.

\begin{proposition}
Let $N$, $M$, and $L$ denote the total number of nodes, internal nodes,
and leaves of a full binary tree $T$, respectively. The following relations
hold
\[
N = 2L - 1,\ M = L - 1
\]
\end{proposition}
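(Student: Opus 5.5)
The plan is a double count of the edges of $T$. In a full binary tree every node is either a leaf, with no children, or an internal node, with exactly two children. This gives the first relation
\[
N = M + L.
\]

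Next, count the edges. Each non-root node has exactly one parent, so the number of edges is $N-1$. Counting the same edges from the parent side, each internal node contributes exactly two edges, so the number of edges is $2M$. Equating the two counts gives
\[
N - 1 = 2M.
\]

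Substituting $N = M + L$ into $N - 1 = 2M$ yields $M + L - 1 = 2M$, that is, $M = L - 1$. Then $N = M + L = 2L - 1$, which proves both relations.

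Alternatively, one could argue by induction on the number of internal nodes. A single-node tree has $L=1$, $M=0$, $N=1$, so the relations hold. Replacing any leaf by an internal node with two leaf children changes $(L, M, N)$ by $(+1, +1, +2)$, and this preserves both identities. Every full binary tree arises from the single node by repeated such expansions, for instance by building it top-down from the root. I would present the edge-counting argument, since it is shorter.

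There is no real obstacle here. The only point needing care is that "full" means every internal node has exactly two children, which the paper guarantees because each $\wild$ case triggers exactly two recursive calls. The tree is finite by the termination remark, since its depth is at most $n$.
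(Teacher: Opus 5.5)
Your edge double count is correct and complete: the partition of nodes into leaves and internal nodes gives $N = M + L$, counting edges by child gives $N-1$ and by parent gives $2M$, and the two relations follow. The paper gives no argument of its own here. It calls the fact well known and refers to \cite{LST} for the general (non-binary) case, so your proof supplies a self-contained justification where the paper only cites one.

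Your route also extends verbatim to that cited setting. For a full $q$-ary tree the parent-side count becomes $N-1 = qM$. Combined with $N = M+L$, this yields $M = (L-1)/(q-1)$ and $N = (qL-1)/(q-1)$, which is the version needed when \MultiSolve\ branches over all elements of $\FF_q$.

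The inductive alternative is also valid, but the edge count is the cleaner choice. Your closing remark about \MultiSolve\ is not needed, since the statement concerns an arbitrary finite full binary tree. Only finiteness and the ``zero or two children'' property are used.
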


The above is a well-known result. For a proof in the general (non-binary)
case, we refer the reader to \cite{LST}.

\begin{definition}
An oracle function \Oracle\ is {\em accurate} if it correctly predicts
the $\wild$ status of a \GBSafe\ call. It is {\em perfect} if it
correctly predicts the outcome of every \GBSafe\ call, whether $\tame$ or
$\wild$. Such a perfect oracle minimizes the total number of \GBSafe\ executions
among all possible oracle functions.
\end{definition}

We also define $\OracleT$ as the oracle function that always returns
the $\tame$ status. It represents the worst-case behavior of \MultiSolve,
as \GBSafe\ is invoked at every node of the recursion tree. On the other
hand, a perfect oracle represents the best-case scenario, where only the
necessary calls to \GBSafe\ are performed.
For simplicity, in the following analysis, we assume that the polynomial systems
under consideration have no solutions. Hence, \MultiSolve\ explores the full
recursion tree.

\begin{proposition}
Let $N$ be the number of \GBSafe\ executions under $\OracleT$ and let $L$
be the number of executions under a perfect oracle. Then
\[
\frac{N}{L} = 2 - \frac{1}{L}
\]
In particular, the maximum possible speedup is strictly smaller than
a factor of 2.
\end{proposition}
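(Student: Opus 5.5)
The plan is to reduce the statement to the counting result for full binary trees stated just above, $N = 2L - 1$ and $M = L - 1$. Since the systems are assumed to have no solutions, \MultiSolve\ explores its whole recursion tree. The first step is to pin down this tree: I take $T$ to be the recursion tree generated under a perfect oracle. Its internal nodes are exactly the partial assignments $u$ where \GBSafe\ would return $\wild$, and its leaves are those where \GBSafe\ would return $\tame$. Every internal node has exactly two children, one for each $b \in \FF_2$, so $T$ is a full binary tree.

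Next I compare the two oracles on this tree, checking first that both explore the same tree. Under $\OracleT$, \GBSafe\ is invoked at every visited node, and its outcome is fixed by the node itself. So the algorithm recurses exactly where \GBSafe\ returns $\wild$ and stops exactly where it returns $\tame$. Under a perfect oracle, the prediction coincides with that outcome, so the branching is identical. Hence both oracles visit the same tree $T$. Under $\OracleT$ the number of \GBSafe\ executions is the total number of nodes of $T$. Under the perfect oracle, \GBSafe\ is called only at the nodes predicted $\tame$, namely the leaves. So the latter count is $L$, the number of leaves of $T$, consistent with the notation in the statement.

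Applying $N = 2L - 1$ then gives $N/L = (2L-1)/L = 2 - 1/L$. Since $L \geq 1$, we get $N/L < 2$, which gives the claim that the maximal speedup is strictly less than a factor $2$. The perfect oracle is the one that minimizes \GBSafe\ calls, so no other oracle can do better, and this ratio is the largest possible speedup over $\OracleT$.

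The main subtlety is the identification in the second step: that \GBSafe\ at a leaf really succeeds, and that a node at which all variables are assigned is necessarily a leaf. Termination ensures this: at depth $n$ the system consists only of linear equations (together with the field equations), so \GBSafe\ returns $\tame$ there. I treat the outcome of \GBSafe\ as a deterministic function of the node, ignoring any variability of the timeout $\tau$, so that the same tree arises under both oracles.
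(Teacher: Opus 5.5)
Your proposal is correct and follows the paper's proof. Both arguments identify the no-solution recursion tree as a full binary tree, count all of its nodes under $\OracleT$ and only its leaves under a perfect oracle, and conclude from $N = 2L-1$. Your extra checks (that both oracles explore the same tree, with the outcome of \GBSafe\ treated as a deterministic function of the node, and that termination at depth $n$ forces a leaf there) make explicit what the paper leaves implicit.
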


\begin{proof}
Under $\OracleT$, every node of the full binary recursion tree $T$ requires
a \GBSafe\ execution, while a perfect oracle avoids all internal nodes and
requires only the computations associated with the leaves. Hence, $N$ and
$L$ are respectively the number of nodes and leaves of $T$. Since $T$ is
a full binary tree, we have $N = 2L - 1$, and the result follows.
\end{proof}

This result extends to arbitrary finite fields \cite{LST}.
The preceding analysis shows that the cost of \MultiSolve\ is essentially
dominated by the number of tame nodes in the recursion tree, since each of
them corresponds to a complete \Gr\ basis computation performed by \GBSafe.
Since all calls to \GBSafe\ are subject to the same fixed timeout,
we adopt a simplified cost model in which each execution of \GBSafe\ has
unit cost.

In the next section, we apply \MultiSolve\ within the \GBDecode\ algorithm
for computing \Gr\ bases. For different choices of parameters, such as $r\leq k$,
the combinatorial cost $C(\bt)$ of the algorithm is therefore adjusted by
a multiplicative factor given by the average number of tame cases encountered
during the execution of \MultiSolve. Note that, when applying \MultiSolve\ to
\GBDecode, the number of variable assignments with $b = 1$ must be bounded by the residual
weight $t' = t - \bt$.


\section{Experimental results}

We have implemented in \Magma\ \cite{MAGMA} the C-Hamming, FC-Hamming, and QFC-Hamming ideals
introduced in Sections 3, 4, and 5, together with the \MultiSolve\ and \GBDecode\ algorithms.
We investigate the performance of this algebraic approach when the solving step of \GBDecode\
is carried out by \MultiSolve.

All experiments use the NIST Security Category 1 parameter set of the Classic McEliece
cryptosystem \cite{ClassME}, namely
\[
n = 3488, k = 2720, t = 64
\]
for a binary linear code drawn uniformly at random with these parameters. No structure
of the code is exploited by \GBDecode. The expected number of additional vectors of weight
$t$ having the same syndrome is given by (see, for instance, \cite{GLMS})
\[
\frac{\binom{n}{t} - 1}{2^{n-k}} \approx 2^{-312}
\]
Thus, with overwhelming probability, there is no other vector of weight $t$ with the
same syndrome, so the solution is unique, as assumed in Section 7.

\subsection*{Setup}

One run of the implementation corresponds to one iteration of the \textbf{while} loop
of \GBDecode. An Evaluation Set $S = \{x_{i_1},\ldots,x_{i_r}\}\subset X$ is obtained from
the free variables resulting from Gaussian elimination applied to a randomly permuted
ordering of $X$, and one vector $u\in\FF_2^r$ with $\wt(u) = \bt$ is examined. For $\bt = 0$,
this vector is $u = 0$. For $\bt > 0$, it is chosen uniformly at random among the
$\binom{r}{\bt}$ vectors of weight $\bt$. The algorithm \GBDecode\ examines all such vectors,
resulting in the combinatorial cost $C(\bt)$ defined in Section 7. A single integer seed
determines the binary linear code, the error vector, the Evaluation Set, and the vector $u$,
making each run exactly reproducible. The same seeds are used for the three Hamming ideals,
allowing them to be compared instance by instance. Each configuration is run on 200 instances
for each value of $r$, namely $r = k - 10 = 2710$ and $r = k = 2720$, using multiple CPUs.

The parameters of \GBSafe\ are a timeout $\tau = 20$ minutes and a degree bound $d = 20$.
Note that for the FC-Hamming ideal, the maximal degree of the generators is $10$.
The oracle used is $\OracleT$, so that the nodes of the recursion tree of \MultiSolve\
are exactly the calls to \GBSafe: the \wild\ calls are its internal nodes and the tame
calls its leaves, as in Section 8. A call is \wild\ either because the timeout expires
or because $\GB(H,d)$ is computed but is not linear. The \MultiSolve\ recursion progressively
assigns values to at most $k - r$ variables in the Information Set that do not belong to $S$.

All results are obtained using the degree reverse lexicographic order {\tt XRev-BitFold},
in which the $X$-variables come first, in decreasing order of their index, followed by
the auxiliary variables. A second order, {\tt XRev-PeakFold}, differing only in the ordering
of the auxiliary variables, produced almost identical recursion trees.

The computations were run with \Magma\ V2.29-5 on an Intel Xeon Gold 6230R at $2.10$ GHz, with
$26$ physical cores and $256$ GB of memory.

\subsection*{One iteration of \GBDecode\ at $r = k - 10$}

For $r = k - 10 = 2710$, with $n - r = 778$ remaining variables and residual weight
$t' = 64$ (Prange case, $\bt = 0$), the three Hamming ideals yield the systems
reported in Table \ref{tab:main}. They are constructed using a recursive binary decomposition
of the remaining variables into halves, with
\[
L = \lceil \log_2(\max(t', n-r-t') + 1) \rceil - 1 = 9
\]
as prescribed in Section 2.

In every instance and for each of the three ideals, the call to \GBSafe\ at the root
returned $\wild$, its \Gr\ basis not being completed within the timeout $\tau$. This is
precisely the situation \MultiSolve\ is designed for: assigning few variables replaces
the original system with a family of simpler ones. The number of tame calls determines
the cost of one iteration, as in Section 8.

Table \ref{tab:main} reports the average number of tame calls, together with the mean depth
of the tame calls and the solving degree actually observed in the corresponding \Gr\ basis
computations performed by \Magma. For the FC-Hamming and QFC-Hamming ideals, the reported
values are averaged over $200$ instances. Computations with the C-Hamming ideal were significantly
more time-consuming, and we therefore restricted the experiments to $50$ instances. The FC-Hamming
ideal required the fewest tame calls on average. A tame \GBSafe\ run with the FC-Hamming ideal takes
about $10$ minutes on average.

\begin{table}[H]
\begin{center}
\small
\begin{tabular}{lcccccc}
\hline
& \multicolumn{3}{c}{system at the root}
& \multicolumn{3}{c}{recursion tree} \\
Hamming ideal & vars & gens & $\maxdeg$ & tame calls & tame depth & sol.\ deg. \\
\hline
C-Hamming   & $7544$ & $6776$ & $2$  & $133.3$ & $7.1$ & $2$ \\
FC-Hamming  & $2844$ & $2076$ & $10$ & $6.2$   & $2.9$ & $10$ \\
QFC-Hamming & $4133$ & $3365$ & $2$  & $6.8$   & $3.2$ & $3$ \\
\hline
\end{tabular}
\end{center}
\caption{The three Hamming ideals for $r = 2710$ and $\bt = 0$. The table reports the systems at the
root of the recursion tree (excluding the $n-k$ linear equations of the code and the field
equations) and the corresponding tame statistics of \MultiSolve\ for a single iteration
of \GBDecode.}
\label{tab:main}
\end{table}

In the implementation, leaf variables are identified with the original variables; for C-Hamming, only the root
variables required by the Hamming constraints are retained. For QFC-Hamming, auxiliary product variables are
introduced only when they occur in the recursive quadratic factorization. Table \ref{tab:main} reports
the sizes of the resulting reduced systems.

\begin{table}[H]
\begin{center}
\small
\begin{tabular}{lrrrrrrrrr}
\hline
& \multicolumn{9}{c}{depth} \\
& $0$ & $1$ & $2$ & $3$ & $4$ & $5$ & $6$ & $7$ & $8$ \\
\hline
C-Hamming   & $0$ & $0$ & $0$ & $0$ & $0$ & $0$ & $0$ & $122.7$ & $10.6$ \\
FC-Hamming  & $0$ & $0$ & $2.9$ & $1.0$ & $2.2$ & $0.1$ & --- & --- & --- \\
QFC-Hamming & $0$ & $0$ & $2.6$ & $1.7$ & $1.6$ & $0.6$ & $0.3$ & --- & --- \\
\hline
\end{tabular}
\end{center}
\caption{Average number of tame calls at each depth of the recursion tree,
at $r = 2710$ and $\bt = 0$, over all runs. A dash means that no tree reaches that depth.}
\label{tab:depth}
\end{table}

\subsection*{Recovering the Prange algorithm}

For $r = k$ and $\bt = 0$, the algorithm \GBDecode\ reduces to the Prange algorithm
\cite{Pra}, as observed in Section 7. The Evaluation Set is then an entire Information Set,
all the remaining $X$-variables are determined by the code equations, and only the weight
condition remains to be checked. Our experiments with the FC-Hamming ideal reproduce exactly
this behavior: for every instance, the recursion tree consists of a single tame node.
In this case, the corresponding call to \GBSafe\ takes about $1$ second on average.
This shows that the computational behavior of our approach is directly comparable
to that of classical ISD methods. For the parameters considered here, the combinatorial cost
of Prange is
\[
C(0) = 2^{142.78}
\]

\subsection*{Choosing $r$ and $\bt$}

Table \ref{tab:params} reports, for $r = k = 2720$ and $r = k - 10 = 2710$, the average number
of tame calls with the FC-Hamming ideal, together with the expected combinatorial cost
$C(\bt)$ of Section 7. Lowering $r$ from $k$ to $k-10$ decreases $\log_2 C(0)$ by $1.24$ bits,
but the average number of tame calls increases from $1$ to $6.2$. Thus, the additional \Gr\ basis
computations reduce the overall benefit of the combinatorial gain. Taking $\bt = 2$ and
$\bt = 30$, at $r = 2710$, gives respectively $6.3$ and $7.4$ tame calls on average, while
$\log_2 C(\bt)$ increases by about $7$ and $117$ bits relative to $\bt = 0$. For the parameters
considered here, these experiments therefore indicate that, among the choices tested,
$r = k$ and $\bt = 0$ is the most efficient one.

\begin{table}[H]
\begin{center}
\small
\begin{tabular}{lrrrrr}
\hline
$r$                & $2720$   & $2710$   & $2710$   & $2710$   & $2710$   \\
$\bt$              & $0$      & $0$      & $2$      & $20$     & $30$     \\
\hline
FC-Hamming         & $1$      & $6.2$    & $6.3$    & $6.3$    & $7.4$    \\
$\log_2 C(\bt)$    & $142.78$ & $141.54$ & $148.53$ & $216.35$ & $258.66$ \\
\hline
\end{tabular}
\end{center}
\caption{Average number of tame calls with the FC-Hamming ideal, together with $\log_2 C(\bt)$,
where $C(\bt)$ is the expected combinatorial cost, for different choices of $r$ and $\bt$.}
\label{tab:params}
\end{table}

\subsection*{Stability of the experiments}

To assess the stability of the observed averages with respect to the sample size, we ran the experiments on $200$
instances and recomputed the averages after $50$, $100$, $150$, and $200$ instances. Table \ref{tab:stab} reports
the average number of tame calls for $r = 2710$ and $\bt = 0$. The values remain fairly stable as further instances
are added. The two batches of $100$ instances are also consistent: the first $100$ give averages of $5.9$ for FC-Hamming
and $6.8$ for QFC-Hamming, while the second $100$ give $6.5$ and $6.8$, respectively. These variations can be
explained by the randomness of the experiments.

Each average is reported together with an approximate $95\%$ confidence interval, computed as the sample mean plus
or minus $1.96$ times its standard error. The standard error is the sample standard deviation divided by the square
root of the number of instances. Over the full set of $200$ instances, the sample standard deviations are $3.8$
for FC-Hamming and $3.5$ for QFC-Hamming, yielding confidence intervals of approximately $\pm 0.5$ tame calls.

\begin{table}[H]
\begin{center}
\small
\begin{tabular}{lcccc}
\hline
instances   & $50$ & $100$ & $150$ & $200$ \\
\hline
FC-Hamming  & $5.8 \pm 0.8$ & $5.9 \pm 0.7$ & $6.2 \pm 0.6$ & $6.2 \pm 0.5$ \\
QFC-Hamming & $7.0 \pm 1.0$ & $6.8 \pm 0.6$ & $6.8 \pm 0.5$ & $6.8 \pm 0.5$ \\
\hline
\end{tabular}
\end{center}
\caption{Average number of tame calls at $r = 2710$ and $\bt = 0$, with approximate $95\%$ confidence intervals.}
\label{tab:stab}
\end{table}

These experiments show the two components of the cost of \GBDecode. An Evaluation Set strictly contained
in an Information Set gives a combinatorial gain of $1.24$ bits at $r = k - 10$, together with an
algebraic cost in the inner loop, measured by the tame calls of \MultiSolve. The balance between these
two components depends on the Hamming ideal and on the solver, as shown by the different behavior
of the three constructions. A Hamming ideal requiring fewer tame calls, or a faster solver, would reduce
the algebraic cost and could make smaller values of $r$ competitive.

\section{Conclusions and further directions}

We have presented an algebraic approach to the Syndrome Decoding Problem, introducing
various ideals defining the Hamming variety based on the theory of elementary symmetric
functions and factorizations derived from Lucas' theorem. These constructions are integrated
with the Information Set Decoding paradigm. The resulting \GBDecode\ algorithm introduces
a tunable trade-off between combinatorial search and algebraic solving, while \MultiSolve\
replaces a hard polynomial system with a family of simpler systems obtained by progressively
assigning variables.

Experiments on random binary linear codes with Classic McEliece Category~1 parameters
show that the choice of the Hamming representation has a significant impact on the
computational cost. Among the three constructions, FC-Hamming required the fewest \Gr\
basis computations on average. For the parameter choices considered in the experiments,
the Prange-like configuration $r = k$ and $\bt = 0$ yielded the lowest overall cost.

These results support the viability of the proposed framework and suggest several directions
for further investigation, including improved Hamming representations, more efficient
\Gr\ basis solving strategies, and a broader exploration of the trade-off between the
combinatorial and algebraic components of \GBDecode.

\section*{Acknowledgements}

The authors would like to thank Lorenzo D'Ambrosio and Gabriele Mancini for carefully
reading parts of the manuscript and for their helpful comments and suggestions. The first
author gratefully acknowledges Raffaele Vitolo for his hospitality and support during
a research stay at the University of Salento. The stimulating discussions, valuable insights,
and constructive suggestions offered during this stay played an important role in shaping
and refining this work.

\end{document}